\documentclass[conference]{IEEEtran}
\IEEEoverridecommandlockouts

\pdfoutput=1

\usepackage{xspace}
\usepackage{amsmath,amssymb,amsthm}
\usepackage{bm}
\usepackage{makecell}
\usepackage{multirow}
\usepackage{thmtools,thm-restate}
\usepackage{hyperref}

\usepackage[vlined,boxed,linesnumbered]{algorithm2e}
\usepackage{zref-clever}

\usepackage[normalem]{ulem}

\usepackage{enumitem}

\newcommand{\ie}{i.e.,\xspace}
\newcommand{\eg}{e.g.,\xspace}
\newcommand{\etc}{etc.\xspace}
\newcommand{\etal}{et al.\xspace}

\renewcommand{\paragraph}[1]{\smallskip\noindent \textbf{#1} \;}

\newtheorem{theorem}{Theorem}
\newtheorem{lemma}{Lemma}
\newtheorem{corollary}{Corollary}
\newtheorem{observation}{Observation}
\newtheorem{definition}{Definition}

\usepackage{mathtools}
\DeclarePairedDelimiter{\ceil}{\lceil}{\rceil}
\DeclarePairedDelimiter{\floor}{\lfloor}{\rfloor}

\DeclareMathOperator{\Forall}{\forall\,}
\DeclareMathOperator{\Exists}{\exists\,}
\DeclareMathOperator{\Nexists}{\nexists\,}

\DontPrintSemicolon

\SetKwHangingKw{InstParam}{instantiation parameter:}{}{}
\SetKwHangingKw{InstParams}{instantiation parameters:}{}{}
\SetKwHangingKw{Init}{initialization:}{}{}
\SetKwHangingKw{LocVar}{local variable of every process}{}{}
\SetKwProg{ProcCode}{code of every process}{ is}{}
\SetKwProg{UniProcCode}{code of process}{ is}{}
\SetKwIF{If}{ElseIf}{Else}{if}{then}{else if}{else}{}

\SetKwComment{Comment}{$\rhd$}{}
\SetCommentSty{textit}

\SetKwFor{Upon}{upon}{do}{}
\SetKwFor{ForAll}{for all}{do}{}

\SetKwComment{Comment}{$\rhd$}{}
\SetCommentSty{textit}
\SetArgSty{upshape}
\SetProgSty{upshape}

\let\oldnl\nl
\newcommand{\nonl}{\renewcommand{\nl}{\let\nl\oldnl}}

\usepackage{tikz}
\usetikzlibrary{
    decorations.pathreplacing,
    math, calligraphy, 
    arrows, arrows.meta,
    shapes, shapes.geometric
}

\zcsetup{
cap = true,
}
\zcRefTypeSetup{algocf}{
Name-sg = Algorithm ,
name-sg = algorithm ,
Name-sg-ab = Alg. ,
name-sg-ab = alg. ,
Name-pl = Algorithms ,
name-pl = algorithms ,
}
\zcRefTypeSetup{algocfline}{
Name-sg = line ,
name-sg = line ,
name-sg-ab = l. ,
Name-pl = lines ,
name-pl = lines ,
name-pl-ab = l. ,
}
\zcRefTypeSetup{observation}{
Name-sg = Observation ,
name-sg = observation ,
name-sg-ab = obs. ,
Name-pl = Observations ,
name-pl = observations ,
}
\zcRefTypeSetup{corollary}{
Name-sg = Corollary ,
name-sg = corollary ,
name-sg-ab = cor. ,
Name-pl = Corollaries ,
name-pl = corollaries ,
}
\zcRefTypeSetup{theorem}{
Name-sg = Theorem ,
name-sg = theorem ,
Name-sg-ab = Thm. ,
name-sg-ab = thm. ,
Name-pl = Theorems ,
name-pl = theorems ,
}
\AddToHook{env/lemma/begin}{\zcsetup{countertype={theorem=lemma}}}
\AddToHook{env/definition/begin}{\zcsetup{countertype={theorem=definition}}}
\AddToHook{env/corollary/begin}{\zcsetup{countertype={theorem=corollary}}}

\newcommand{\outputi}{\textsc{output}\xspace}

\newcommand{\outputseti}{\textsc{outputSet}\xspace}
\newcommand{\movei}{\textsc{move}\xspace}
\newcommand{\choicei}{\textsc{choice}\xspace}

\newcommand{\OutputGraph}{\ensuremath{\mathit{SOSGraph}}\xspace}

\newcommand{\CompatibilityWalk}{\ensuremath{\mathit{SOSWalk}}\xspace}

\newcommand{\leadOutSets}{\ensuremath{\mathit{LeadersOutputSets}}\xspace}

\newcommand{\wait}{\ensuremath{\mathsf{wait}}\xspace}

\newcommand{\communicate}{\ensuremath{\mathsf{communicate}}\xspace}
\newcommand{\observe}{\ensuremath{\mathsf{observe}}\xspace}
\newcommand{\observes}{\ensuremath{\mathsf{observes}}\xspace}
\newcommand{\observed}{\ensuremath{\mathsf{observed}}\xspace}

\newcommand{\ooutput}{\ensuremath{\mathsf{output}}\xspace}

\newcommand{\obs}{\ensuremath{\mathsf{obs}}}

\newcommand{\leader}{\ensuremath{\mathit{leader}\xspace}}

\newcommand{\ValUniv}{\ensuremath{\mathbb{V}}\xspace}

\newcommand{\vecpairsuniv}{\ensuremath{\mathbb{U}}\xspace}

\newcommand{\Vin}{\ensuremath{V_\text{in}}\xspace}
\newcommand{\vin}{\ensuremath{v_\text{in}}\xspace}
\newcommand{\Vout}{\ensuremath{V_\text{out}}\xspace}
\newcommand{\vout}{\ensuremath{v_\text{out}}\xspace}

\newcommand{\task}{\ensuremath{T}\xspace}
\newcommand{\taskn}{\ensuremath{T_n}\xspace}

\newcommand{\Vvecin}{\ensuremath{\mathcal{I}}\xspace}
\newcommand{\Vvecout}{\ensuremath{\mathcal{O}}\xspace}
\newcommand{\map}{\ensuremath{\Delta}\xspace}

\newcommand{\Vvecinn}{\ensuremath{\mathcal{I}_n}\xspace}
\newcommand{\Vvecoutn}{\ensuremath{\mathcal{O}_n}\xspace}
\newcommand{\mapn}{\ensuremath{\Delta_n}\xspace}

\newcommand{\SV}{\ensuremath{\textit{OS}}\xspace}
\newcommand{\ST}{\ensuremath{\textit{SOS}}\xspace}

\newcommand{\information}{\ensuremath{\textit{I}}\xspace}

\newcommand{\idx}{\ensuremath{\textit{i}}\xspace}
\newcommand{\iin}{\ensuremath{\textit{in}}\xspace}
\newcommand{\out}{\ensuremath{\textit{out}}\xspace}

\newcommand{\valence}{\ensuremath{\mathit{Val}}\xspace}
\newcommand{\sstate}{\ensuremath{\sigma}\xspace}
\newcommand{\States}{\ensuremath{\Sigma}\xspace}
\newcommand{\instates}{\ensuremath{\mathit{InStates}}\xspace}

\newcommand{\Asynchrony}{\ensuremath{\mathit{Asynchrony}}\xspace}
\newcommand{\Resilience}{\ensuremath{\mathit{Resilience}}\xspace}

\newcommand{\Termination}{\ensuremath{\mathit{Termination}}\xspace}

\newcommand{\pb}{\ensuremath{\Pi}\xspace}

\newif\ifannote

\ifannote
    \newcommand{\anninsert}[2]{{\color{#1}#2}}
    \newcommand{\anncomment}[3]{
        {\color{#1}\colorbox{#1}{\bfseries\sffamily\tiny\textcolor{white}{#2}}
        $\blacktriangleright$ \em #3 $\blacktriangleleft$}
    }
\else
    \newcommand{\anninsert}[2]{#2}
    \newcommand{\anncomment}[3]{}
\fi

\newcommand{\TA}[1]{\anncomment{red}{TA}{#1}}

\newcommand{\ta}[1]{\anninsert{red}{#1}}
\newcommand{\af}[1]{\anninsert{violet}{#1}}

\newcommand{\jw}[1]{\anninsert{brown}{#1}}

\newcommand{\xOmit}[1]{}
\newcommand{\del}[1]{}

\title{%
On the Decidability of Distributed Tasks with Output Sets under Asynchrony and Any Number of Crashes
}

\author{\IEEEauthorblockN{Timoth\'e Albouy}
\IEEEauthorblockA{\textit{IMDEA Software Institute}\\
Madrid, Spain
}
\and
\IEEEauthorblockN{Antonio Fern\'andez Anta}
\IEEEauthorblockA{\textit{IMDEA Software \& IMDEA Networks Institutes} \\
Madrid, Spain
}
\and
\IEEEauthorblockN{Chryssis Georgiou}
\IEEEauthorblockA{\textit{University of Cyprus}\\
Nicosia, Cyprus
}
\and
\IEEEauthorblockN{Nicolas Nicolaou}
\IEEEauthorblockA{\textit{Algolysis Ltd}\\
Nicosia, Cyprus
}
\and
\IEEEauthorblockN{Junlang Wang}
\IEEEauthorblockA{\textit{IMDEA Software Institute \& Universidad Carlos III de Madrid} \\
Madrid, Spain
}
}

\begin{document}

\maketitle

\begin{abstract}
\ta{This paper studies the decidability of \textit{task problems}, i.e., distributed problems expressed as sets of distributed tasks. Specifically, we introduce a new class of task problems called \textit{Set of Output Sets} (\textit{SOS}) \textit{problems}. An SOS problem $\pb_O$ is defined by a set $O$ (called \textit{SOS}), and requires that the set of sets of distinct output values produced across all executions corresponds exactly to $O$.} We then demonstrate that this class of \ta{problems} is decidable: there is a procedure determining whether any SOS \ta{problem} is solvable asynchronously under \jw{$f$} crashes. The decision rule is as follows. Every SOS \ta{problem} is solvable when \ta{$f=0$}. For \ta{$f > 0$}, an SOS \ta{problem} is solvable if and only if the graph $G=(O,\subset)$ is connected. In this graph, each vertex is an output set in $O$, and two vertices are linked by an edge whenever one output set includes the other. One of the surprising implications of our results is that, replacing validity by a \textit{completeness} property (which guarantees that all output sets of size at most $k$ are produced), $k$-set agreement is solvable under any number of crashes \ta{$f \geq 0$} for $k>1$, and unsolvable under \ta{$f>0$} crashes only for $k=1$ (consensus). Finally, we study a novel family of \ta{problems} called $d$-disagreement, which requires the system to always produce $d$ different output values, and we show that its implementability condition is related to the harmonic series.
\end{abstract}

\begin{IEEEkeywords}
Solvability, Decidability, Asynchrony, Impossibility proofs, Distributed tasks, Crash tolerance, Consensus, \texorpdfstring{$k$}{k}-set agreement, Disagreement.
\end{IEEEkeywords}







\TA{
\begin{itemize}
\item cite full version \cite{AFGNW26} (wait for instructions)
\end{itemize}
}

\section{Introduction}

Many problems of distributed computing can be represented as \ta{sets of} \textit{tasks}.
\ta{A task is an abstraction} where every participant has at most one input value and at most one output value.
\ta{For example, the consensus problem contains all tasks where} each participant proposes a value (the inputs), and all non-faulty participants decide on the same value (the outputs) from the proposals.
Although tasks cannot capture some distributed problems (in particular, long-lived primitives such as shared registers or transactional objects), they remain a particularly useful abstraction for studying the solvability boundaries of distributed computing.
\ta{This paper focuses on \textit{task problems}, \ie distributed problems expressed as sets of tasks.}

\paragraph{Towards a theory of distributed decidability.}
One of the grand challenges of distributed computing involves determining which families of distributed problems are decidable and which ones are not.
Decidability in this context means that there exists an effective procedure determining if a given distributed problem is solvable in a given computing model or not.
It has been shown that distributed computing as a whole is undecidable, as undecidable \ta{classes of problems} have been identified within it, \ta{such as} the \ta{class of asynchronous task problems under} at least 2 crashes~\cite{GK99,HR97}.
But undecidability in the general case does not prevent some classes of \ta{problems} to still be decidable: indeed, all asynchronous \ta{task problems} under at most 1 crash~\cite{BD89,BMZ90,MW87}, or in the presence of only initial crashes~\cite{TKM89}, are decidable.
The present work is part of this endeavour: it focuses on a particular class of \ta{task problems} that we call \ta{\emph{SOS problems}}, and it shows that there exists a procedure determining whether any of these SOS \ta{problems} can be implemented asynchronously under any number of crashes.
Eventually, by progressively identifying new classes of decidable and undecidable \ta{distributed problems}, it will be possible to build an exhaustive theory of distributed decidability that is as fundamental to this field as computability theory is to sequential computing.

\ta{
\paragraph{The limits of the classical task formalism.}
Combinatorial topology is one of the most powerful approaches for studying the solvability and decidability of task problems~\cite{HKR13}.
In this approach, a distributed task is a triplet $T=(\Vvecin,\Vvecout,\map)$, where the input and output domains $\Vvecin,\Vvecout$ are modelled as \textit{simplicial complexes}, and the input-output mapping $\map$ is a \textit{carrier map} (a.k.a. task specification) between the input and output simplexes.
With this framework, tasks are divided in two categories: \textit{colorless} tasks, where any process can adopt the input or output value of any other process~\cite{R16}, and \textit{colored} tasks, which do not have this restriction.
Combinatorial topology yielded several far-reaching results, such as BG simulation~\cite{BGLR01,G09} or the ACT theorem~\cite{HS99}.

However, the simplicial machinery imposes an important restriction on the definition of tasks (whether colorless or colored), namely, that the input and output domains of a task are ``closed under inclusion'': any subsimplex of a legal simplex is also legal.
In practice, this means that many important distributed problems, such as strong symmetry breaking (SSB)~\cite{AP16} or election~\cite{IRR11,SP89}, cannot be effectively captured as simplicial complexes: they admit executions outputting several distinct values (\eg both $0$ and $1$), but not executions outputting each value individually.
Capturing specifications of this kind is precisely the goal of the present work.



}

\paragraph{Our approach: focusing on the output sets of tasks.}
In this paper, we partially address the \ta{expressivity} limitations of traditional approaches by \ta{developping a framework for defining and studying tasks that relaxes the closedness constraint of simplicial approaches.}


Building upon this framework, we extend the \textit{Set of Output Sets} (\textit{SOS}) approach introduced by Albouy \etal for tasks with binary output values~\cite{AFGNW25}.
Specifically, we formalize and define the class of \textit{SOS \ta{problems}}, capturing both binary and multivalued output sets, where validity, output multiplicity, and process identities are disregarded.
These SOS \ta{problems} are defined by the set of distinct output values that they can produce across their executions, called their \textit{set of output sets}.
Our approach does not capture \ta{validity- or process-ID-dependent task problems for instance}, but in exchange, it allows us to identify a new class of distributed \af{decidable problems} via the study of SOS \ta{problems} (whereas it has been shown that the classical approach is undecidable~\cite{GK99,HR97}).

\paragraph{Contributions and roadmap.}
Our contributions are the following.
\begin{itemize}[leftmargin=*]
    \item \textit{A novel framework for studying distributed tasks.}
    We introduce in \zcref{sec:model-pb} a new methodology for expressing and analyzing tasks.
    \ta{
    Specifically, we present a formalization of tasks that relaxes the closedness constraint of simplex-based definitions.
    Moreover, unlike the traditional approach which characterizes decidability at the task level (a task being an instance of a distributed problem), we characterize it at the level of \textit{task problems}, which are sets of tasks with some common properties.}
    
    \item \textit{A new class of distributed decidability: SOS \ta{problems}.}
    We study in \zcref{sec:decidable-sos} the particular class of SOS \ta{problems $\pb_O$}, which are defined by the set of output sets $O$ that they can produce.
    We show that this class is decidable: we present a decision rule determining whether any SOS \ta{problem $\pb_O$} is solvable asynchronously under some number of crashes \jw{$f$}.
    This rule can be stated as: \ta{$\pb_O$} is always solvable if \jw{$f=0$}, and it is solvable with \jw{$f>0$} if and only if the graph $G=(O,\subset)$ is weakly connected (\zcref{sec:algorithm-1,sec:non-resil}).
    The vertices of this graph are the output sets in $O$, and its edges are derived from the inclusions between these output sets.\footnote{
        Specifically, we consider the undirected version of $G$, which is sometimes referred to as the \emph{comparability graph} of $(O,\subset)$.
    }
    Interestingly, our results have direct implications for $k$-set agreement.
    Indeed, we can define \textit{complete $k$-set agreement} (a \ta{weaker} version of $k$-set agreement with a completeness property instead of validity) as an SOS \ta{problem}, and show that it can be solved under more than $k$ crashes for $k>1$, but only without crashes for $k=1$, i.e, consensus (\zcref{sec:ConseusVSkSA}).
    This demonstrates a fundamental cut between consensus and ${>}1$-set agreement.
    To the best of our knowledge, we are the first ones to study the decidability of a particular subclass of \ta{task problems} under any crash tolerance, instead of studying the decidability of all \ta{task problems} under a specific crash tolerance.
    
    \item \textit{A particular family of SOS \ta{problems}: $d$-disagreement.}
    Finally, in \zcref{sec:disagreement} we look at a special case of SOS \ta{problems}, which we call $d$-disagreement, and which requires the system to always produce $d$ different output values (hence the system ``disagrees''). 
    This problem can have multiple interesting applications, especially in the context of fault-tolerant load balancing~\cite{GS08}.
    For example, if a distributed system has to execute $d$ idempotent jobs $J_1,...,J_d$ in parallel in the presence of up to \jw{$f$} crashes, $d$-disagreement can be used, such that every process with output value $i \in [1..d]$ executes job $J_i$.
    This effectively guarantees that all jobs are eventually executed despite crashes. 
    %
    The $d$-disagreement problem was introduced in its binary version by Albouy \etal~\cite{AFGNW25}, but the present paper provides its definition in terms of SOS \ta{problems} and generalizes it to the multivalued case.
    We then provide an impossibility proof and an algorithm demonstrating that the implementability condition of $d$-disagreement is tied to the harmonic series.
    Indeed, under \jw{$f$} crashes, the number of processes required to solve disagreement is approximately \jw{$H_d(f+1)$}, where $H_d$ is the $d$-th harmonic number.
\end{itemize}

\zcref{sec:related-work} exposes the research landscape in which this work is situated. Our computing model is presented in \zcref{sec:model-pb}, and we discuss the 
resilience and decidability of SOS \ta{problems} in \zcref{sec:decidable-sos} to \zcref{sec:non-resil}.
We do show how decidability results relate to $k$-set agreement and consensus in \zcref{sec:ConseusVSkSA}.
Lastly we present $d$-disagreement, a newly discovered family of symmetry breaking problem, in \zcref{sec:disagreement}.
Concluding remarks are provided in \zcref{sec:conclusion}. 

\section{Related Work} \label{sec:related-work}

\paragraph{Combinatorial topology for distributed computing.}
As discussed earlier, combinatorial topology has found surprising applications in the study of distributed tasks, as exposed in the monograph by Herlihy, Kozlov, and Rajsbaum~\cite{HKR13}.
This approach finds its roots in STOC 1993, where three concurrent papers leveraged combinatorial or topological results~\cite{BG93,HS93,SZ93} (journal versions:~\cite{BGLR01,HS99,SZ00}), such as Sperner's lemma or Brouwer's fixed point theorem, to prove the impossibility of $k$-set agreement under asynchrony and $k$ crashes~\cite{C93}.
Specifically, the last two papers~\cite{HS93,SZ93} (which won the 2004 Gödel prize) proved this impossibility in the wait-free case, while the first one~\cite{BG93} (which won the 2017 Dijkstra prize) introduces the \textit{BG simulation} technique, extending the impossibility to the general case.
Broadly speaking, \ta{the original} BG simulation shows that adding more correct processes to a wait-free system does not increase its computability power, for a specific class of tasks called \textit{colorless} tasks~\cite{BGLR01} (also called convergence tasks).
\ta{
Later results studied the \textit{colored} case, either by restricting to 3-process tasks~\cite{AFPR25}, or via the extended BG simulation~\cite{G09}, characterizing $f$-resilient solvability for the full class of classical (downward-closed) \textit{colored} tasks, but not for specifications outside that model, such as strong symmetry breaking (SSB)~\cite{IRR11}, election~\cite{IRR11,SP89}, or the $d$-disagreement problem we introduce here.
}



\paragraph{Decidability of distributed \ta{problems}.}
We focus on the asynchronous crash-prone model, as most of the literature on distributed decidability studied this particular model.
Moran and Wolfstahl generalized the famous impossibility of consensus~\cite{FLP85} by proving that all tasks that have a connected \textit{input graph} and disconnected \textit{output graph} do not tolerate crashes~\cite{MW87} (informally, the input (resp. output) graph is constructed by linking the input (resp. output) vectors that differ by only one value).
This result was later completed by Biran, Moran, and Zaks, who demonstrated the decidability of \ta{task problems} in the 1-resilient message-passing model (shown to be equivalent to 1-resilient shared memory~\cite{BD89}), notably by introducing a general algorithm solving all 1-resilient tasks in their classification~\cite{BMZ90}.
\af{Note that the works in \cite{BMZ90,MW87} are restricted to the \jw{$f=1$} case.}
\ta{We use similar graph-theoretic techniques, but we study the general \jw{$f \leq n$} case.}
In the same vein, Taubenfeld, Katz, and Moran showed the decidability of \ta{task problems} in the presence of initial failures~\cite{TKM89}.
The first \textit{undecidability} result of distributed computing is due to Gafni and Koutsoupias, who, by drawing connections with the contractability problem in topology, showed that some 3-process tasks are undecidable in the shared-memory wait-free model (with 2 crashes), i.e, it is impossible to say if these tasks are solvable or not~\cite{GK99}.
This undecidability result was extended to other communication models and arbitrary levels of resilience by Herlihy and Rajsbaum~\cite{HR97}.
The known decidability results of distributed \ta{task problems} in the asynchronous crash-prone model are summarized in \zcref{tab:current-classification}.

\begin{table}[b]
\centering
\footnotesize
\begin{tabular}{|c|c|c|c|}
\hline
\textbf{Family} & \textbf{Condition on crashes} & \textbf{Decidable} & \textbf{Reference} \\
\hline\hline
\multirow{3}{*}{\makecell{Arbitrary \ta{task} \\ \ta{problems}}} 
  & \jw{$f \geq 2$} & No & \cite{GK99,HR97} \\
\cline{2-4}
  & \jw{$f=1$} & Yes & \cite{BD89,BMZ90,MW87} \\
\cline{2-4}
  & initial crashes & Yes & \cite{TKM89} \\
\hline
\makecell{Binary SOS \\ \ta{problems} ($*$)} & any \jw{$f \leq n$} & Yes & \cite{AFGNW25} \\
\hline
SOS \ta{problems} & any \jw{$f \leq n$} & Yes & \textbf{This work} \\
\hline
\end{tabular}
\caption{Decidability results on distributed \ta{problems} identified to date (under asynchrony and crashes).\vspace{-2em}
}
\label{tab:current-classification}
\end{table}

\paragraph{The SOS approach.}
The present work is closest to that of Albouy \etal~\cite{AFGNW25}, who introduced the Set of Output Sets approach in the binary case (\ie the output values are $0$ or $1$).
However, they did not formally define the concept of SOS \ta{problems}.
In contrast, our formalization of \textit{SOS \ta{problems}} can capture both binary and multivalued output sets, and it allows us to transfer the tightness results of~\cite{AFGNW25} to a precise class of \ta{problems}, noted ``Binary SOS \ta{problems} ($*$)'' in \zcref{tab:current-classification}.

\section{Computing Model and Problem Formalization} \label{sec:model-pb}

For clarity, we provide in \zcref{tab:notations} a list of concepts and notations used in this paper.

\begin{table}[tb]
\centering
\footnotesize
\begin{tabular}{|c|c|}
    \hline
    \textbf{Concept or notation} & \textbf{Meaning} \\
    \hline\hline
    $p \in P$ &
    Process $p$ in the set of processes $P$
    \\
    \hline
    $n=|P|$ & Number of processes in the system
    \\
    \hline
    \jw{$f$} & Maximum number of crashed processes
    \\
    \hline
    $\pb, \task, A, E$ & \ta{Task problem, task,} algorithm, execution \\
    \hline
    $\ValUniv$ & Universe of values \\
    \hline
    $\vecpairsuniv$ & Universe of pairs of input/output vectors \\
    \hline
    $\bot \notin \ValUniv$ & Sentinel value denoting no input/output
    \\
    \hline
    $\Vin,\Vout \in (\ValUniv {\cup} \{\bot\})^n$
    & Input and output vectors of size $n$ \\
    \hline
    $\star$ & Unspecified value \\
    \hline
\end{tabular}
\caption{Concepts and notations used in this paper.}\vspace{-2em}
\label{tab:notations}
\end{table}

\subsection{Computing Model} \label{sec:model}

\paragraph{Process model.}
We consider a distributed system with a non-empty set $P=\{p,p',...\}$ of processes ($|P| = n$).
Processes are deterministic computing entities that take steps according to their local state and the events they observe, following an algorithm~$A$.
Failures are restricted to crash faults: in an execution of the system, a process may crash (\ie halt prematurely and take no further steps), but it does not deviate from its algorithm before crashing.
We assume an upper bound \jw{$f$} on the number of processes that may crash in an execution, with \jw{$0 \leq f \leq n$}.
A process that does not crash in an execution $E$ is said to be \emph{correct} in~$E$.
We also assume that all processes have access to their own private local clocks, which may run at a different pace, and which allow them to set local timeouts.

\paragraph{Communication model.}
Processes interact through a generic asynchronous communication medium that is \textit{reliable}: it does not suppress, duplicate, or corrupt communicated information.
Every process $p \in P$ has access to two abstract operations:
\begin{itemize}[leftmargin=*]
    \item $\communicate$ $\information$: process $p$ disseminates some information $\information$ to all the system processes;
    \item $\observe$ $\information$ (callback event): process $p$ is notified that information $\information$ was communicated.
\end{itemize}
The medium guarantees that all correct processes eventually obtain a consistent view of the set of communicated information, while crashed processes may only have partial but always valid views.
More formally, the communication abstraction satisfies the following properties (the ``C'' prefix stands for ``communication'').
\begin{itemize}[leftmargin=*]
    \item \textbf{C-Validity:} If a process $p$ observes information $\information$, then $\information$ must have been previously communicated by some process~$p'$.
    \item \textbf{C-Integrity:} Any process $p$ observes some information $\information$ communicated by a given process~$p'$ at most once.
    \item \textbf{C-Local-Termination:} If a correct process $p$ communicates information $\information$, then some correct process~$p'$ (if there is any) eventually observes~$\information$.
    \item \textbf{C-Global-Termination:} If a correct process $p$ observes information $\information$, then all correct processes eventually observe~$\information$.
\end{itemize}
%
These communicate/observe operations can be implemented straightforwardly in asynchronous communication media such as message-passing networks (\eg using flooding) or shared memory (\eg using register scanning), regardless of the number of crashes~\ta{$f$}~\cite{HT93}.
We emphasize that our model does not place any bounds on communication delays.
Therefore, any ordering or timing of observations that is consistent with the above properties can occur.
This assumption is standard in asynchronous models and is only used in some of our correctness proofs (namely, \zcref{thm:connected-sos,thm:non-resil}).

\subsection{Problem Formalization} \label{sec:pb-formal}

\paragraph{Input and output vectors.}
For some \ta{$n \in \mathbb{N}^{*}$}, a given instance of a distributed task takes an input vector $\Vin = (\vin^1, ..., \vin^n)$ and returns an output vector $\Vout=(\vout^1, ..., \vout^n)$, where the $i$-th value $\vin^i$ in \Vin (resp. $\vout^i$ in \Vout) corresponds to the input (resp. output) value of process $p_i \in P$.
The values in the input and output vectors belong to the set $\ValUniv \cup \{\bot\}$, where $\ValUniv$ is the universe of values (of finite size) and $\bot \notin \ValUniv$ is a sentinel value that represents no input or no output.
In an execution, a process $p \in P$ can output at most one value $v$ using the operation $\ooutput$ $v$ (multiple invocations of $\ooutput$ by $p$ must be for the same value $v$).

Let $\vecpairsuniv = \bigcup_{n \in \mathbb{N}^{*}} ((\ValUniv \cup \{\bot\})^n) \times (\ValUniv \cup \{\bot\})^n)$ be the universe of all pairs of input/output vectors: the $n$-exponentiation gives the set of all possible vectors of size $n$ and the union merges all pairs of different sizes obtained previously.

\paragraph{\ta{Tasks, task problems}, and algorithms.}
\ta{A \textit{distributed task} is a triplet $\task=(\Vvecin,\Vvecout,\map)$, where $\Vvecin$ and $\Vvecout$ are sets of input/output vectors, respectively, and $\map \subseteq \Vvecin \times \Vvecout$ is a nonempty relation between input and output vectors} representing all admissible input-output mappings of \ta{the task}.
\ta{Two vectors in a pair $(\Vin,\Vout) \in \map$ must be of the same size $n$.
However, different pairs in $\map$ can have vectors of different sizes, i.e.,
we allow a task $\task=(\Vvecin,\Vvecout,\map)$ to be defined for multiple sizes $n$.
}

A \ta{\textit{task problem} $\pb=\{\task_1,\task_2,...\}$ is a set of tasks with some common properties\footnote{
  To illustrate the notion of common properties, the consensus problem contains all tasks $(\Vvecin,\Vvecout,\map)$ where, for every pair $(\Vin,\Vout) \in \map$, all non-$\bot$ values in $\Vout$ are the same $v$, and $v \in \Vin$.
  We give another example of common properties for a task problem in \zcref{def:sos-pb} (SOS problem).
}.
All the tasks in a task problem can be seen as instances of this problem.
For example, the consensus problem contains majority consensus tasks, median consensus tasks, etc.
}
A \ta{task problem $\pb$} is \emph{solvable} under asynchrony and \jw{$f$} crashes if at least one of its \ta{tasks $\task \in \pb$} can be implemented by some asynchronous algorithm tolerating up to \jw{$f$} crashes.
\ta{Let $A$ be an asynchronous algorithm tolerating up to \jw{$f$} crashes that admits a set $\Vvecin_A$ of input vectors and $\Vvecout_A$ of output vectors, and that produces a set $\map_A \subseteq \Vvecin_A \times \Vvecout_A$ of $(\Vin,\Vout)$ pairs across all its executions.
Informally, we say that algorithm $A$ \emph{implements} a task~$\task$ if and only if $\task=(\Vvecin_A,\Vvecout_A,\map_A)$.
}
A \ta{task $\task=(\Vvecin,\Vvecout,\map)$} \emph{supports} a system size $n$ if \ta{$\map$} contains at least one pair of vectors of size $n$; likewise, an algorithm $A$ \emph{supports} a system size $n$ if it admits executions with $n$ processes.
A class of \ta{task problems $\mathcal{P} = \{\pb_1, \pb_2, \ldots\}$} is \emph{decidable} under asynchrony and up to \jw{$f$} crashes if there exists an effective procedure determining whether each \ta{problem in~$\mathcal{P}$} is solvable.

\paragraph{Set of output sets (SOS).}
We define the \emph{output set} of an output vector $\Vout$ as the set of distinct output values in $\Vout$, ignoring order, multiplicity, and~$\bot$: $\SV(\Vout) = \{\vout^i \in \Vout \mid \vout^i \neq \bot\}$.
The \emph{set of output sets} (SOS) of a \ta{task $\task$ is then $\ST(\task) = \{\SV(\Vout) \mid (\Vin,\Vout) \in \task\}$}.
Every SOS $O \subseteq 2^\ValUniv$ must be non-empty, but $O$ can contain the empty set if the task instance has executions that produce no output values.
As $\ValUniv$ is finite, then $O$ is also necessarily finite.

\paragraph{SOS \ta{problems}.}
Given some set of output sets $O$, an \ta{\textit{SOS problem} $\pb_O$ is a task problem that}, for every defined system size~$n$, contains all \ta{tasks} that \textit{(i)}~produce SOS~$O$ and \textit{(ii)}~have a full mapping between input and output vectors of the same size (\ie inputs are not used to determine outputs, hence the task is ``validity-less'').
More formally, we have the following.


\begin{definition}[SOS \ta{problem}] \label{def:sos-pb}
An SOS \ta{problem $\pb_O$} with SOS $O \subseteq 2^\ValUniv, O \neq \varnothing$ is defined as follows.
Let \ta{$\taskn=(\Vvecinn,\Vvecoutn,\mapn)$ be the restriction of a task $\task=(\Vvecin,\Vvecout,\map)$ containing only vectors of size $n$}.
\ta{
\begin{align*}
    \pb_O = \big\{\task \subseteq \vecpairsuniv, \task \neq \varnothing \,&\bigl\vert\, 
\Forall n \in \mathbb{N}: \taskn \neq \varnothing \implies \\
&\big(\underbrace{\ST(\taskn) = O}_\text{\textup{constraint} \textit{(i)}}
    \land\, \underbrace{\mapn = \Vvecinn \times \Vvecoutn}_\text{\textup{constraint} \textit{(ii)}} \big)\big\}.
\end{align*}
}
\end{definition}





For example, consider a \ta{task $\task=(\Vvecin,\Vvecout,\map)$} that must produce $O = \{\{1\},\{1,2\}\}$ for pairs of vectors of size $n = 2$ given $\ValUniv = \{1,2\}$.
Constraint \textit{(i)} requires that the output sets \ta{corresponding to $\Vvecout_2$} are exactly $O$.
Concretely, \ta{$\Vvecout_2$} must contain some output vectors in $\{(1,1),(1,\bot),(\bot,1)\}$ to yield output set $\{1\} \in O$, and some output vectors in $\{(1,2),(2,1)\}$ to yield output set $\{1,2\} \in O$.
Constraint \textit{(ii)} requires that every input vector in \ta{$\map_2$} (\eg $(1,1),(1,2),(2,\bot),...$) is paired with each of these output vectors (\ie inputs do not constrain outputs).
In other words, \ta{$\map_2$} is the Cartesian product of all size-$2$ input vectors with all size-$2$ output vectors.

Therefore, solving an SOS \ta{problem $\pb_O$} under asynchrony and \ta{$f$} crashes reduces to providing an asynchronous algorithm $A$ tolerating up to \ta{$f$} crashes and satisfying constraints \textit{(i)} and \textit{(ii)} of \zcref{def:sos-pb}.
For constraint \textit{(ii)}, it suffices that $A$ ignores the input values of the processes: for a given size $n$, if all output vectors can be produced regardless of the input vector, then there is a total mapping between inputs and outputs.
For constraint \textit{(i)}, $A$ must fulfill the following two properties.
%
\begin{itemize}[leftmargin=*]
    \item \emph{Safety}: Every execution of $A$ produces an output vector $\Vout$ such that $\SV(\Vout) \in O$.
    \item \emph{Completeness}: For every system size $n$ supported by $A$ and every output set $o \in O$, there is an execution of $A$ producing an output vector $\Vout$ of size $n$ such that $\SV(\Vout) = o$.
\end{itemize}

Observe that, without the completeness property, safety can be straightforwardly achieved by algorithms where processes always output the same values (provided sufficient correct processes).
Hence, completeness acts as a form of non-triviality or weak validity in our approach.

Notice that the above specification does not impose \textit{global termination}, \ie that all non-faulty processes of the system eventually output a value.\footnote{
    Let us however note that \zcref{alg:async-d-dis-2dt,alg:async-non-resil} already guarantee global termination, but \zcref{alg:async-resil} does not.
}
However, as our definition of the solvability of SOS \ta{problems} is quite weak, the impossibility results that we give also apply to stronger formulations, such as those including global termination.



\section{SOS \ta{Problems} are Decidable} \label{sec:decidable-sos}

This section proves our main theorem (\zcref{thm:sos-decidable}), stating that the entire class of SOS \ta{problems} is decidable under asynchrony and any number of crashes \ta{$f$}.
That is, we can determine whether any SOS \ta{problem $\pb_O$} is solvable asynchronously with up to \jw{$f$} crashes or not.
Our SOS \ta{problem} classification relies on the concept of SOS graphs, defined as follows.

\begin{definition}[SOS graph] \label{def:output-graph}
\sloppy{Given an SOS \ta{problem $\pb_O$} with an SOS $O$, the $\OutputGraph(O)$ function returns the \emph{SOS graph} $(O,E)$ of \ta{$\pb_O$}, where the set of edges is $E \triangleq \{\{o,o'\} \mid o \subset o' \lor o' \subset o \}$.}
\end{definition}

Informally, the SOS graph of an \ta{SOS problem $\pb_O$} links every pair of compatible output sets $o,o' \in O$, in the sense that it would not break safety if some processes ``think'' that the system produces $o$, while some other ``think'' that the system produces $o'$.
Note that the SOS graph of an SOS $O$ is the undirected version of graph $(O,\subset)$.
For convenience, we say that an SOS $O$ (or SOS problem $\pb_O$) is \textit{connected} if its SOS graph $\OutputGraph(O)$ is connected, and \textit{disconnected} otherwise.
\zcref{fig:sos-graphs} gives two examples of connected and disconnected SOS graphs.
The left-hand SOS graph corresponds to the SOS \ta{problem $\pb_O$} with SOS $O=\{\{1\},\{3\},\{1,2\},\{1,3\},\{2,3\}\}$.
Similarly, the right-hand SOS graph corresponds to the SOS \ta{problem $\pb_{O'}$} with SOS $O'=\{\{1\},\{1,2\},\{1,3\},\{2,3\}\}$.
We now state our main theorem.

\begin{figure}[tb]
\centering
\begin{tikzpicture}[scale=0.8, transform shape]

\tikzmath{
\dx = .7;
\yTop = 1;
}

\node[ellipse, draw] (l12) at (0, \yTop) {$1,2$};
\node[ellipse, draw] (l1) at (\dx, 0) {$1$};
\node[ellipse, draw] (l13) at (2*\dx, \yTop) {$1,3$};
\node[ellipse, draw] (l3) at (3*\dx, 0) {$3$};
\node[ellipse, draw] (l23) at (4*\dx, \yTop) {$2,3$};

\draw (l12) -- (l1);
\draw (l1) -- (l13);
\draw (l13) -- (l3);
\draw (l3) -- (l23);

\tikzset{shift={(6, 0)}}

\node[ellipse, draw] (r12) at (0, \yTop) {$1,2$};
\node[ellipse, draw] (r1) at (\dx, 0) {$1$};
\node[ellipse, draw] (r13) at (2*\dx, \yTop) {$1,3$};
\node[ellipse, draw] (r23) at (4*\dx, \yTop) {$2,3$};

\draw (r12) -- (r1);
\draw (r1) -- (r13);

\end{tikzpicture}
\caption{Examples of a connected SOS graph (on the left) for SOS $\{\{1\},\{3\},\{1,2\},\{1,3\},\{2,3\}\}$ and a disconnected SOS graph (on the right) for SOS $\{\{1\},\{1,2\},\{1,3\},\{2,3\}\}$.
}
\label{fig:sos-graphs}
\end{figure}
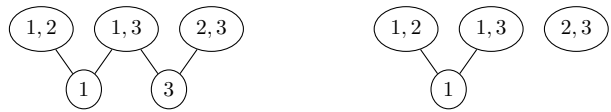

\begin{theorem}[Decidability of SOS \ta{problems}] \label{thm:sos-decidable}
An \ta{SOS problem $\pb_O$} with set of output sets $O$ can be solved asynchronously under up to \jw{$f \in \mathbb{N}$} crashes if and only if \jw{$f=0$} or $\OutputGraph(O)$ is connected.
\end{theorem}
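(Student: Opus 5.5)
The proof splits into three parts: sufficiency for $f=0$, sufficiency for $f>0$ when $\OutputGraph(O)$ is connected, and necessity of connectivity when $f>0$.

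\textbf{Case $f=0$.} No process crashes, so every process eventually observes everything communicated. I would give a leader-based algorithm.
\begin{itemize}[leftmargin=*]
\item A designated process (say $p_1$) picks a target output set $o\in O$ nondeterministically, driven by local timing, for instance by how many observations precede some timeout. Any choice rule that can reach every $o$ under some scheduling suffices.
\item The leader communicates $o$. Every process waits to observe it and then outputs according to an assignment of the values of $o$ to processes.
\item If $o=\varnothing$, nobody outputs.
\item A small $n$ with $|o|>n$ can be handled by letting the algorithm support only system sizes $n\geq \max_{o\in O}|o|$.
\end{itemize}
Safety and completeness are then immediate.

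\textbf{Case $f>0$, connected graph (sufficiency).} Fix a walk $o_0,o_1,\dots,o_m$ in $\OutputGraph(O)$ that visits every vertex, with consecutive sets comparable. The algorithm works as follows.
\begin{itemize}[leftmargin=*]
\item Processes communicate and move along the walk according to asynchronous timeouts.
\item Each process keeps a current position $i$ on the walk and only ever advances. Whenever it observes a higher position, it jumps to it.
\item A process outputs a value only when safe, using the key invariant that at any moment the positions held by outputting processes span at most two consecutive walk vertices $o_i,o_{i+1}$.
\item Outputs are restricted to values in $o_i\cap o_{i+1}$, the smaller of the two comparable sets. The walk is arranged so that any final output set is exactly the larger set, or exactly some visited vertex.
\end{itemize}

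The cleanest design I would try is to choose at most one outputting "step" per process, drawn from the smaller set's values. Completeness then comes from scheduling: for each $o\in O$ and each $n$, exhibit an execution (crashes and delays allowed, though crashes are not needed) in which the walk stops exactly at $o$ and outputs cover exactly $o$.

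Guaranteeing safety against crashed processes with stale views is the delicate part. A process that crashed after outputting under position $i$ must not invalidate later outputs. This is exactly why consecutive vertices are required to be nested. It is also why I would forbid processes that output at position $i$ from allowing the system to move beyond $i+1$ once they have output. Equivalently, the first output fixes the frontier.

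\textbf{Necessity ($f>0$, disconnected graph).} This follows a valency/FLP-style argument. Partition $O$ into a component $C$ and the rest $O\setminus C$. Call a configuration $C$-valent or $\bar C$-valent according to which kind of output sets its extensions can reach. Completeness gives executions of both kinds, so some initial reachable configuration is bivalent.

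The key observation is that any single output value $v$ already constrains the final set to sets containing $v$. The harder fact is that two output sets from different components are incomparable, so no single execution prefix can remain compatible with both once outputs accumulate.

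I would then adapt the FLP critical-configuration argument with one crash:
\begin{itemize}[leftmargin=*]
\item Find a bivalent configuration in which any step by a process $p$ makes the configuration univalent in different directions depending on ordering.
\item Crash $p$, or delay its communication indefinitely, which is allowed since $f\geq1$.
\item Derive an execution whose output set lies in neither component, or in both, which is a contradiction.
\end{itemize}

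Rather than full FLP, a simpler route may exist. Since the problem does not require termination, I would track the set $S$ of values output so far in a prefix. Consider the last prefix whose extensions can still reach both components. The next output of $v$ by $p$ separates them. Crashing $p$ before that output, and merging it with an alternative extension, should yield an output set that contains elements from both sides yet equals some set in $O$ comparable to sets on both sides, contradicting disconnection.

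I expect this necessity step to be the main obstacle. Because termination is not required, the usual bivalency machinery must be reworked around output events rather than decisions.
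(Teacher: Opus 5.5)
\paragraph{Assessment.} Your decomposition into three cases matches the paper's, and the $f=0$ part is essentially \zcref{alg:async-non-resil}. Your FLP-style outline for necessity is also the paper's route: a critical state, the diamond property when the two deciding events belong to distinct processes, and $\Resilience$ to obtain a third process's event when they belong to the same process. Your $C$ versus $O\setminus C$ valence is a harmless binary variant of the paper's ``disconnected valence''. The genuine gap is the $f>0$ connected case: what you describe is not yet an algorithm, and its stated mechanisms would fail.

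\paragraph{The two-position invariant is not enforced.} You assert that outputting processes span at most two consecutive walk positions, but nothing enforces this. Your rule ``advance on timeout, jump to the highest observed position'' actually breaks it. A process can output at position $1$ and then be slow or crash, while the others time out repeatedly and run to the end of the walk. A process that has already output cannot by itself ``forbid'' further progress. The restriction must instead come from requiring positive evidence before anyone advances. The paper's device is a fixed set of more than $f$ leaders:
\begin{itemize}
\item A leader moves from $i$ to $i+1$ only if, before its timeout, it has observed $\movei(i+1)$ from \emph{every} leader.
\item Otherwise it unilaterally commits by communicating $\outputseti(o_i)$ and stops.
\item A leader that commits at $i$ never sends $\movei(i+2)$, so no commitment lies beyond $i+1$. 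Hence $\leadOutSets$ is $\{o_i\}$ or $\{o_i,o_{i+1}\}$.
\item A crash can only make leaders stop, never block them, because stopping is a local decision and at least one leader is correct.
\end{itemize}

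\paragraph{Nothing guarantees every value of the chosen set is output.} Your design does not ensure that each value of the settled set is actually output despite up to $f$ crashes. Your rule ``output only values of $o_i\cap o_{i+1}$'' also contradicts your claim that the final set can be the larger one. For example, with $O=\{\{1,2\}\}$, if every process that would output $2$ crashes, the execution produces $\{1\}\notin O$. The paper separates choosing from outputting. For each $v\in V=\bigcup O$, a group $P_v$ of more than $f$ processes outputs $v$ upon observing any committed set containing $v$. The produced set is therefore exactly the \emph{union} $\bigcup\leadOutSets$, i.e.\ the larger of two nested consecutive vertices. This redundancy forces a restriction on the supported system sizes; the paper uses $n\ge |V|(f+1)$. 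You state no relation between $n$ and $f$ in this case, yet some restriction is indispensable: for $O=\{\{1,2\}\}$ and $f\ge n$, every process may crash before outputting.

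\paragraph{Necessity.} Drop your ``simpler route''. From a critical state, the event separating the two components need not be an output event; it can be any observation. In \zcref{alg:async-non-resil}, for instance, the outcome is fixed by the leader's first observation, long before any output. Your proposed merge therefore gives no contradiction. Your termination worry is handled in the paper by $\Termination(\States)$: since every state is finite, repeatedly following disconnected one-event extensions must reach a critical state.
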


Therefore, SOS \ta{problems} can be partitioned into two subclasses: \textit{(1)} the SOS \ta{problems} that can tolerate any number of crashes (their SOS graph is connected), and \textit{(2)} the SOS \ta{problems} that do not tolerate any crash (their SOS graph is disconnected).
The proof of the above theorem follows from \zcref{alg:async-resil}/\zcref{thm:connected-sos}, presented in the next section and addressing case \textit{(1)} for connected SOS \ta{problems}, and from \zcref{thm:disconnected} and \zcref{alg:async-non-resil}/\zcref{thm:non-resil}, presented in \zcref{sec:non-resil} and addressing case \textit{(2)} for disconnected SOS \ta{problems}.
(The results presented in \zcref{sec:non-resil} are quite close to those of \cite{MW87}, however, the translation between the two models is not straightforward, hence we provide our own proofs in the appendix for completeness.)


Notice that the SOS \ta{problem $\pb_O$} with SOS $O=\{\varnothing\}$ can be trivially solved under any number of crashes, simply by having all processes of the system do nothing.
Therefore, for the sake of simplicity, we consider in our classification that the empty graph $G=(\varnothing,\varnothing)$ is a special case of a connected graph, but in the remainder of this section, we only consider SOS \ta{problems $\pb_O$} whose SOS $O$ is different from $\{\varnothing\}$.

\section{Asynchronous \texorpdfstring{$f$}{f}-Resilient Algorithm for any Connected SOS}
\label{sec:algorithm-1}

\begin{algorithm}[tb]
\footnotesize
\InstParams{a finite walk $\CompatibilityWalk_O$ of $\OutputGraph(O)$.} \label{line:async-resil:instant}
\smallskip

\Init{pick a set of leader processes $P_\leader \subseteq P$ such that \jw{$|P_\leader| > f$};
given $V = \bigcup O$ the set of all possible output values in $O$, create partition $P_1,...,P_{|V|}$ of $P$ s.t. 
 \jw{$\Forall i \in [1 .. |V|], |P_i| > f$}.
} \label{line:async-resil:init}
\smallskip

\ProcCode{$p_\leader \in P_\leader$}{ \label{line:async-resil:pLeader}
    \For{$i=1$ \textbf{to} $|\CompatibilityWalk_O|$}{ \label{line:async-resil:iteratePATH}
        \lIf{$i \neq |\CompatibilityWalk_O|$}{%
            \communicate $\movei(i+1)$; \label{line:async-resil:leader-move}
        } 
        $\wait$ for an arbitrary local time; \\ \label{line:async-resil:leader-wait}
        \If{$\observed$ $\movei(i+1)$ from less than $|P_\leader|$ leaders}{ \label{line:async-resil:leader-NotEveryoneMove}
            $o_i \gets$ the $i$-th output set in $\CompatibilityWalk_O$; \\
            \communicate $\outputseti(o_{i})$; \\ \label{line:async-resil:leader-stay}
            $\mathsf{break}$.   \label{line:async-resil:leader-stop}
        }
    }
}
\smallskip

\ProcCode{$p_v \in P_v, v \in V$}{ \label{line:async-resil:outputV}
    \Upon{$p_v$ \observes some $\outputseti(o)$ where $v \in o$}{ \label{line:async-resil:outputV-observe}
        $\ooutput$ $v$; \\ \label{line:async-resil:outputV-output}
        $\mathsf{exit}$. \label{line:async-resil:outputV-stop}
        } 
}
\caption{Asynchronous algorithm implementing all connected SOS \ta{problems $\pb_O$} with SOS $O$ assuming \jw{$n \geq |V|(f+1)$}, where $V=\bigcup O$ is the set of all output values in $O$.}
\label{alg:async-resil}
\end{algorithm}

This section introduces \zcref{alg:async-resil}, a universal \jw{$f$}-resilient algorithm that can produce any SOS \ta{problem $\pb_O$} with a connected SOS $O$ under any number of crashes \jw{$f \geq 0$}, provided that \jw{$n \geq (f+1)|V|$}, where $V=\bigcup O$ is the set of values in $O$.
This algorithm is not tight in the sense that the proportion \jw{$f/n$} of crashes tolerated in the system is usually lower than what is optimal, but it shows the solvability of all connected SOS \ta{problems} under any number of crashes \jw{$f$}.

The algorithm is instantiated (\zcref{line:async-resil:instant}) by providing as a parameter $\CompatibilityWalk_O$, a finite walk\footnote{
    In graph theory, a walk is similar to a path, but the same edge or vertex can be visited multiple times in a walk.
    In our case, a finite walk $\CompatibilityWalk_O$ can always be constructed as the SOS $O$ is finite by definition (see \zcref{sec:pb-formal}).
} that visits all output sets in $G=\OutputGraph(O)$.
By assumption, $G$ is a connected graph, hence we can construct a walk $\CompatibilityWalk_O$ visiting all vertices of $G$.
$\CompatibilityWalk_O=(o_1,o_2,...)$ is encoded as a sequence of output sets that can be iterated in order.

%

At the initialization of the algorithm~(\zcref{line:async-resil:init}), a set of more than \ta{$f$} leader processes $P_\leader \subseteq P$ is selected, ensuring that at least one leader process does not crash.
We then create a partition of $|V|$ distinct exhaustive subsets of $P$ (where $V$ is the set of all output values in $O$), each of size greater than \ta{$f$} and associated with a distinct value $v \in V$.
We only assume \jw{$n \geq |V|(f+1)$}, hence every leader process $p_\leader \in P_\leader$ is also contained in some subset $P_v$ of the partition of $P$.

Each leader process $p_\leader \in P_\leader$ iterates $i$ from $1$ to $|\CompatibilityWalk_O|$~(\zcref{line:async-resil:iteratePATH}).
If $i$ is smaller than $|\CompatibilityWalk_O|$, $p_\leader$ communicates $\movei(i+1)$~(\zcref{line:async-resil:leader-move}).
After waiting an arbitrary local time~(\zcref{line:async-resil:leader-wait}), $p_\leader$ checks whether it observed $\movei(i+1)$ from all other leaders~(\zcref{line:async-resil:leader-NotEveryoneMove}).
If it is not the case, $p_\leader$ communicates $\outputseti(o_i)$ with the $i$-th element in $\CompatibilityWalk_O$ (\zcref{line:async-resil:leader-stay}) and then exits the loop~(\zcref{line:async-resil:leader-stop}).
When a process $p_v \in P_v$ (where $v \in V$) observes some $\outputseti(o)$ information such that $v$ is in $o$, $p_v$ outputs $v$ (\zcref{line:async-resil:outputV-output}) and exits (\zcref{line:async-resil:outputV-stop}).

\paragraph{Correctness of~\zcref{alg:async-resil}.} \label{sec:univ-alg-correct}
Now we prove \zcref{thm:connected-sos} which states the correctness of \zcref{alg:async-resil}.
We first show some preliminary results.
In the following, we denote by $\leadOutSets$ the set containing all output sets $o$ from the $\outputseti(o)$ communicated by the group of leader processes $P_\leader$ during an execution of \zcref{alg:async-non-resil}.

\begin{lemma} \label{lem:async-resil:leader-always-output}
$\leadOutSets$ is nonempty.
\end{lemma}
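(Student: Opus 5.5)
We want to show that in every execution of \zcref{alg:async-resil}, at least one leader communicates some $\outputseti(o)$. Note that $\leadOutSets$ is understood as the set of output sets appearing in $\outputseti$ information communicated by leaders, whether or not they are observed. The plan is a case analysis on the behaviour of a correct leader. Since $|P_\leader| > f$ and at most $f$ processes crash, at least one leader $p_\ell \in P_\leader$ is correct. As a correct process, $p_\ell$ executes its loop (\zcref{line:async-resil:iteratePATH}) until it either breaks at \zcref{line:async-resil:leader-stop} or exhausts all $|\CompatibilityWalk_O|$ iterations. The loop is finite and the wait at \zcref{line:async-resil:leader-wait} lasts a finite local time, so $p_\ell$ reaches one of these two outcomes.

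\textbf{Case 1: $p_\ell$ breaks at some iteration $i$.} Before breaking, it executes \zcref{line:async-resil:leader-stay}, communicating $\outputseti(o_i)$. Hence $o_i \in \leadOutSets$.

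\textbf{Case 2: $p_\ell$ never breaks before the last iteration.} The last iteration is $i = |\CompatibilityWalk_O|$. There, \zcref{line:async-resil:leader-move} is skipped, so no leader ever communicates $\movei(|\CompatibilityWalk_O|+1)$. By C-Validity, $p_\ell$ therefore observes $\movei(|\CompatibilityWalk_O|+1)$ from zero leaders. Since $0 < |P_\leader|$ (the leader set is nonempty because $|P_\leader| > f \geq 0$), the test at \zcref{line:async-resil:leader-NotEveryoneMove} succeeds. Thus $p_\ell$ communicates $\outputseti(o_{|\CompatibilityWalk_O|})$, which again puts an element in $\leadOutSets$.

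\textbf{Main obstacle.} The only delicate point is Case 2: we must argue that the final iteration necessarily triggers the condition, and this is exactly where C-Validity is invoked. We also need the walk to be nonempty, so that the loop runs at least once. This holds because $O \neq \{\varnothing\}$ and $O$ is nonempty, so $\CompatibilityWalk_O$ contains at least one vertex. Everything else is routine: it only uses the finiteness of the loop and the waits, together with the fact that a correct process completes each of its local steps.
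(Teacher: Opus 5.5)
Your proof is correct and follows essentially the paper's route: a correct leader exists because $|P_\leader| > f$, and it must trigger the test at \zcref{line:async-resil:leader-NotEveryoneMove} at the latest in the final iteration, where no $\movei(|\CompatibilityWalk_O|+1)$ is ever communicated. Your explicit use of C-Validity in that final iteration spells out the paper's brief remark that the condition fires because $o_i$ is the last element of the walk. One small point: the walk is nonempty simply because $O \neq \varnothing$; the extra hypothesis $O \neq \{\varnothing\}$ is not needed.
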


\begin{proof}
As \jw{$|P_\leader| > f$} by construction, there is at least one non-faulty leader process $p_\leader \in P_\leader$.
Recall that $\CompatibilityWalk_O$ is necessarily of finite size.
%
For some loop index $i \in [1..|\CompatibilityWalk_O|]$, $p_\leader$ eventually enters the condition at \zcref{line:async-resil:leader-move} for one of the following reasons:
\begin{itemize}[leftmargin=*]
    \item the communication was delayed due to asynchrony, and $p_\leader$ did not observe all $\movei(i+1)$ by the end of the \wait at \zcref{line:async-resil:leader-wait};

    \item some leader process $p'_\leader \in P_\leader$ did not communicate $\movei(i+1)$, either because $p'_\leader$ crashed, or because $o_i$ was the last element of $\CompatibilityWalk_O$ at \zcref{line:async-resil:leader-move}.
\end{itemize}

%
Hence, the process $p_\leader$ eventually communicates some $\outputseti(o_i)$ at \zcref{line:async-resil:leader-stay}, and therefore $\leadOutSets$ is nonempty.
\end{proof}

\begin{lemma} \label{lem:async-resil:leader-traversal}
For every output set $o_i \in \CompatibilityWalk_O$, there is an execution of \zcref{alg:async-resil} where $\leadOutSets = \{o_i\}$.
\end{lemma}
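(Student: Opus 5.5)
We prove the statement by constructing, for a fixed index $i \in [1..|\CompatibilityWalk_O|]$, an explicit execution of \zcref{alg:async-resil} in which every leader behaves identically up to iteration $i$ and then stops at iteration $i$. Since we only need existence, we may choose the scheduling of communications and the lengths of the local waits at \zcref{line:async-resil:leader-wait} freely, subject to the C-properties. We also let no process crash; this is allowed because $f \ge 0$ crashes is an upper bound, not a requirement.

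\textbf{Iterations $j < i$.} At each such iteration every leader communicates $\movei(j+1)$ at \zcref{line:async-resil:leader-move}. We schedule all these $\movei(j+1)$ messages to be observed by all leaders before any leader finishes its wait at iteration $j$. This is consistent with asynchrony, since the model bounds neither local waits nor delays, so a leader's wait can simply last until delivery. Then every leader observes $\movei(j+1)$ from all $|P_\leader|$ leaders, the condition at \zcref{line:async-resil:leader-NotEveryoneMove} is false, no $\outputseti$ is communicated, and every leader proceeds to iteration $j+1$. Induction on $j$ gives that all leaders reach iteration $i$ without communicating any $\outputseti$.

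\textbf{Iteration $i$.} If $i = |\CompatibilityWalk_O|$, no $\movei(i+1)$ is ever communicated, so every leader observes it from fewer than $|P_\leader|$ leaders. Each leader therefore communicates $\outputseti(o_i)$ and breaks. If $i < |\CompatibilityWalk_O|$, we delay delivery of every $\movei(i+1)$ until after every leader has finished its wait at \zcref{line:async-resil:leader-wait}. This is admissible because the messages are eventually delivered, which preserves C-Local/Global-Termination. Each leader then sees fewer than $|P_\leader|$ such messages (in fact none, or at most its own, depending on whether self-observation is immediate). Hence each leader communicates $\outputseti(o_i)$ and breaks. Consequently every leader communicates exactly $\outputseti(o_i)$ and nothing else, so $\leadOutSets = \{o_i\}$, which is nonempty, consistent with \zcref{lem:async-resil:leader-always-output}.

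\textbf{Main obstacle.} The delicate step is justifying that a leader's own $\movei(i+1)$ cannot force the condition false. When $|P_\leader| \ge 2$, delaying the other leaders' messages suffices regardless of self-observation. When $|P_\leader| = 1$ (possible when $f = 0$), we instead argue that the observation of a leader's own communication is itself an asynchronous event that may be delayed past the end of its arbitrary wait. We would state explicitly that the model permits this, since the C-properties only require eventual observation.
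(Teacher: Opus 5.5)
Your proof is correct and follows essentially the same route as the paper's: you take a crash-free execution in which, for every $j<i$, each leader observes all $\movei(j+1)$ before its wait ends, and in which the $\movei(i+1)$ observations are delayed past the wait at iteration $i$, so every leader communicates only $\outputseti(o_i)$. Your explicit treatment of the last index of the walk (where no move is communicated at all) and of the single-leader case (where self-observation is delayed) fills in details that the paper's proof leaves implicit.
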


\begin{proof}
As $\CompatibilityWalk_O$ is a sequence, every element $o_i \in \CompatibilityWalk_O$ has a finite index $i \in \mathbb{N^*}$.
There must exist a crash-free execution where, for every loop index $j \in [1..i-1]$, every leader process communicates $\movei(j+1)$ at \zcref{line:async-resil:leader-move}, observes all $|P_\leader|$ $\movei(o_j+1)$ by the end of the \wait at \zcref{line:async-resil:leader-wait}, and therefore skips the condition at \zcref{line:async-resil:leader-NotEveryoneMove}.
(If $i=1$, then the loop is executed only once.)
At the end of this procedure, loop index $i$ is reached.
There must exist a crash-free execution that reaches this stage, and all leader processes do not observe all $|P_\leader|$ $\movei(i+1)$ by the end of the \wait at \zcref{line:async-resil:leader-wait} due to asynchrony (since any timing of observations consistent with the communication properties is admissible, see \zcref{sec:model}).
Thus, all leader processes enter the condition at \zcref{line:async-resil:leader-NotEveryoneMove}, communicate $\outputseti(o_i)$ at \zcref{line:async-resil:leader-stay}, and exit the loop at \zcref{line:async-resil:leader-stop}.
Hence, $\leadOutSets = \{o_i\}$.
\end{proof}

\begin{observation} \label{obs:leader-comm-outputset}
By construction (\zcref{line:async-resil:iteratePATH,line:async-resil:leader-stay}), all elements of $\leadOutSets$ are in $\CompatibilityWalk_O$.
\end{observation}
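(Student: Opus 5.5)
The claim is that every element of $\leadOutSets$ lies in $\CompatibilityWalk_O$. The plan is to argue directly from the code of the leader processes in \zcref{alg:async-resil}. By definition, $\leadOutSets$ collects the arguments $o$ of all $\outputseti(o)$ messages communicated by processes of $P_\leader$. So it suffices to show that every such communication carries an element of the walk.

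\textbf{Step 1: locate every emission.} Inspecting the code, the only place where any process communicates an $\outputseti(\cdot)$ is \zcref{line:async-resil:leader-stay}. This line belongs to the code of $p_\leader \in P_\leader$; the code of $p_v \in P_v$ only \emph{observes} $\outputseti$ information and never communicates it. By C-Validity, every $\outputseti(o)$ that is observed was previously communicated, so nothing outside these emissions can enter the picture.

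\textbf{Step 2: identify the argument at each emission.} \zcref{line:async-resil:leader-stay} executes inside the loop of \zcref{line:async-resil:iteratePATH}, for some index $i \in [1..|\CompatibilityWalk_O|]$. Immediately before it, $o_i$ is assigned to be the $i$-th output set of $\CompatibilityWalk_O$. Since $i$ ranges only over valid indices of the walk, this element exists. Hence every $\outputseti(o)$ emitted has $o = o_i \in \CompatibilityWalk_O$.

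\textbf{Step 3: handle crashes.} A leader that crashes only stops taking steps and never deviates from its code. Any $\outputseti$ it did communicate therefore still arose from \zcref{line:async-resil:leader-stay}, so the argument of Step 2 applies to it unchanged.

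No real obstacle is expected. The only care needed is to confirm that no other line communicates $\outputseti$, and that the index $i$ never leaves the walk's range. The latter is guaranteed by the loop bounds; the \textsf{break} at \zcref{line:async-resil:leader-stop} only shortens the loop and cannot push $i$ out of range.
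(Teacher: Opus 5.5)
Your proof is correct and matches the paper's ``by construction'' justification: the only place an $\outputseti(\cdot)$ is communicated is the leader's line inside the loop over $i \in [1..|\CompatibilityWalk_O|]$, and there the argument is the $i$-th element $o_i$ of the walk. The appeal to C-Validity in Step 1 is harmless but unnecessary, since $\leadOutSets$ is defined by what the leaders communicate, not by what is observed.
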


\begin{lemma} \label{lem:async-resil:1-or-2-output-sets}
$\leadOutSets$ is either of the form $\{o_i\}$ or of the form $\{o_i,o_{i+1}\}$, where $o_i,o_{i+1}$ are two subsequent elements in $\CompatibilityWalk_O$.
\end{lemma}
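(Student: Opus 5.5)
Let $i^\star$ be the smallest loop index at which some leader communicates an $\outputseti$. Such an index exists and is finite by \zcref{lem:async-resil:leader-always-output}, since every communicated output set has a finite index by \zcref{obs:leader-comm-outputset}. By definition, some leader $p$ communicates $\outputseti(o_{i^\star})$ at iteration $i^\star$. The plan is to show that every other leader that communicates an $\outputseti$ does so at iteration $i^\star$ or $i^\star+1$. This gives $\leadOutSets \subseteq \{o_{i^\star}, o_{i^\star+1}\}$, and both announced forms follow.

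\textbf{Step 1: no leader goes beyond iteration $i^\star+1$.}
Leader $p$ exits the loop at iteration $i^\star$ (\zcref{line:async-resil:leader-stop}), so it never reaches \zcref{line:async-resil:leader-move} with index $i^\star+1$. Hence it never communicates $\movei(i^\star+2)$. By C-Validity, no leader can observe $\movei(i^\star+2)$ from $p$, so no leader observes $\movei(i^\star+2)$ from all $|P_\leader|$ leaders.

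Consider any leader $q$ that reaches iteration $i^\star+1$. Either $i^\star+1=|\CompatibilityWalk_O|$, in which case it is the last iteration and nothing follows. Or the condition at \zcref{line:async-resil:leader-NotEveryoneMove} necessarily holds, so $q$ communicates $\outputseti(o_{i^\star+1})$ and breaks. In both cases $q$ cannot proceed past iteration $i^\star+1$.

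\textbf{Step 2: no leader stops before iteration $i^\star$.}
A leader that communicates an $\outputseti$ at an index $j<i^\star$ would contradict the minimality of $i^\star$. Leaders that crash or never communicate any $\outputseti$ contribute nothing to $\leadOutSets$.

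\textbf{Conclusion and the point needing care.}
Every element of $\leadOutSets$ is therefore $o_{i^\star}$ or $o_{i^\star+1}$, and $o_{i^\star}\in\leadOutSets$. If only $o_{i^\star}$ appears, we get the form $\{o_i\}$. Otherwise we get $\{o_i,o_{i+1}\}$ with $i=i^\star$, and these are consecutive elements of the walk.

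The main subtlety is that the walk may revisit the same vertex, so $o_{i^\star}$ and $o_{i^\star+1}$ could coincide as sets. In that case $\leadOutSets$ collapses to a singleton, which is still of the stated form. The argument must therefore be phrased in terms of loop indices, not set values, throughout.

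Step 1 also assumes that each leader communicates its $\movei$ messages in loop order. This holds because each leader follows the loop sequentially (\zcref{line:async-resil:iteratePATH}). So the only way to observe $\movei(i^\star+2)$ from $p$ would be for $p$ to have passed iteration $i^\star$, which it did not.
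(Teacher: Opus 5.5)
Your proof is correct and follows essentially the same argument as the paper: a leader that breaks at iteration $i$ never communicates $\movei(i+2)$, so every leader that reaches iteration $i+1$ must stop there. Anchoring at the minimal stopping index $i^\star$ and reasoning with loop indices rather than set values is a slightly cleaner version of the paper's pairwise ``take $o_i,o_j$ with $i<j$, deduce $j=i+1$'' argument. Your worry that $o_{i^\star}$ and $o_{i^\star+1}$ might coincide is unnecessary, because consecutive walk vertices are adjacent and the SOS graph has no self-loops, but it does no harm.
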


\begin{proof}
\zcref{lem:async-resil:leader-always-output} shows that $\leadOutSets$ cannot be empty.
\zcref{lem:async-resil:leader-traversal} implies the existence of some executions where $\leadOutSets$ is a singleton $\{o_i\}$ where $o_i \in \CompatibilityWalk_O$.

Let us consider an execution where $\leadOutSets$ is at least of size $2$.
By \zcref{obs:leader-comm-outputset}, $\leadOutSets$ contains elements of $\CompatibilityWalk_O$, and therefore we can consider any two distinct output sets $o_i,o_j \in \leadOutSets$, where $i$ and $j$ are the indices of $o_i,o_j$ in $\CompatibilityWalk_O$ such that $i<j$.
By definition of $\leadOutSets$, there is one leader process $p_\leader \in P_\leader$ that communicated some $\outputseti(o_i)$ at \zcref{line:async-resil:leader-stay}, during the loop iteration $i$, and then exited the loop at \zcref{line:async-resil:leader-stop}.
Therefore, $p_\leader$ could not have communicated $\movei(i+2)$ during the loop iteration $i+1$, and every leader process that reaches the loop iteration $i+1$ thus enters the condition at \zcref{line:async-resil:leader-NotEveryoneMove}, communicates $\outputseti(o_{i+1})$ at \zcref{line:async-resil:leader-stay} and exits the loop at \zcref{line:async-resil:leader-stop}.
It follows that $j=i+1$, and thus that $\leadOutSets$ is of the form $\{o_i,o_{i+1}\}$.
\end{proof}

\begin{lemma} \label{lem:async-resil:leader-gives-output-set}
In an execution with set $\leadOutSets$, the set of output values produced by the execution is $o=\bigcup \leadOutSets$.
\end{lemma}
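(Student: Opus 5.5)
We prove the two inclusions between $o$ and $\bigcup \leadOutSets$ separately, reasoning about the code executed by the non-leader role of \zcref{alg:async-resil} (\zcref{line:async-resil:outputV,line:async-resil:outputV-observe,line:async-resil:outputV-output,line:async-resil:outputV-stop}) together with the communication properties of \zcref{sec:model}. Throughout, we fix an execution, let $\leadOutSets$ be the set of output sets communicated through $\outputseti$ in it, and let $o$ be the set of distinct non-$\bot$ values output in it.

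\emph{Inclusion $o \subseteq \bigcup \leadOutSets$.}
The only place where a value is output is \zcref{line:async-resil:outputV-output}. A process $p_v \in P_v$ outputs $v$ only after it observes some $\outputseti(o')$ with $v \in o'$. By C-Validity, this $\outputseti(o')$ was communicated by some process. The only instruction communicating $\outputseti$ is \zcref{line:async-resil:leader-stay}, executed by leaders, so $o' \in \leadOutSets$ and therefore $v \in \bigcup \leadOutSets$. Since $p_v$ only ever outputs its own value $v$, every produced value lies in $\bigcup \leadOutSets$.

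\emph{Inclusion $\bigcup \leadOutSets \subseteq o$.}
Take $v \in o' \in \leadOutSets$; we must show some process of $P_v$ outputs $v$.
\begin{itemize}[leftmargin=*]
\item First we argue that $\outputseti(o')$ reaches some correct process. Its communicating leader may have crashed right after communicating, so C-Local-Termination (which needs a correct sender) does not apply directly. This is the main obstacle.
\item In the intended reading, $\leadOutSets$ is restricted to output sets that are observed, which is how the lemma will be used for safety and completeness. Concretely, we would either argue about the output sets that some correct process observes, or note that in the executions used for completeness (\zcref{lem:async-resil:leader-traversal}, which are crash-free) all leaders are correct.
\item For the general safety claim we would redefine $\leadOutSets$ as the set of output sets observed by at least one process. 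By C-Global-Termination, as soon as a correct process observes $\outputseti(o')$, all correct processes do.
\item Since $|P_v| > f$ by the partition built at \zcref{line:async-resil:init}, $P_v$ contains a correct process $p_v$. That process eventually observes $\outputseti(o')$ with $v \in o'$ and outputs $v$, unless it already exited. It exits only after outputting $v$, so either way $v \in o$.
\end{itemize}

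\emph{Combining and closing the gap.}
The two inclusions give $o = \bigcup \leadOutSets$. To cover the gap for a leader that crashes right after communicating, we interpret $\leadOutSets$ as the set of communicated output sets that are observed by some correct process. Under this reading, any $\outputseti$ communicated by a correct leader belongs to $\leadOutSets$ by C-Local-Termination, so \zcref{lem:async-resil:leader-always-output} still applies. An output value derived from an $\outputseti$ observed only by crashed processes is handled by noting that \zcref{lem:async-resil:1-or-2-output-sets} still bounds all communicated sets within two consecutive walk elements. Those elements are comparable, so their union is the larger one, which lies in $O$.
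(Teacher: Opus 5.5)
\textbf{Verdict: there is a genuine gap.} You correctly identify the flawed step, which the paper's own proof shares, but that step cannot be repaired, and your patch does not repair it.

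Your first inclusion $o \subseteq \bigcup \leadOutSets$ is sound: it uses C-Validity and the fact that only leaders communicate $\outputseti$. It makes explicit what the paper leaves implicit. For the reverse inclusion you use the paper's idea, namely that $|P_v|>f$ gives a correct process in each $P_v$. The obstacle you flag is real, and the paper's proof does not overcome it. The paper simply asserts that, by C-Local-Termination and C-Global-Termination, all correct processes observe every communicated $\outputseti(o')$. This does not follow for a leader that crashes right after \zcref{line:async-resil:leader-stay}.

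In fact the gap cannot be closed under the four formal C-properties of \zcref{sec:model}, because the statement is then false. Take $O=\{\{1\},\{1,2\}\}$, $f=1$, $n=4$, leaders $\ell_1,\ell_2$ and walk $(\{1\},\{1,2\})$. Let $\ell_1$ time out in iteration $1$ and communicate $\outputseti(\{1\})$. Let $\ell_2$ observe both $\movei(2)$, reach the last iteration, communicate $\outputseti(\{1,2\})$, and crash before any process observes it. This is allowed: C-Local-Termination only constrains correct senders, and C-Global-Termination only correct observers. Then $\leadOutSets=\{\{1\},\{1,2\}\}$, but the produced output set is $\{1\}$.

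It gets worse. Take $O=\{\{1\},\{1,2,3\}\}$, $n=6$, the same schedule, and $\ell_2\in P_2$. If $\ell_2$ observes its own $\outputseti(\{1,2,3\})$, outputs $2$, and then crashes, the execution produces $\{1,2\}\notin O$.

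Your repair fails in each of its readings:
\begin{itemize}
\item Reading $\leadOutSets$ as the sets observed by \emph{some} process keeps your first inclusion. The second then needs each such set to be observed by a correct process, and nothing guarantees that.
\item Reading it as the sets observed by some \emph{correct} process yields the second inclusion but breaks the first. A faulty process may output after observing a set that no correct process ever sees.
\item Your final appeal to \zcref{lem:async-resil:1-or-2-output-sets} only sandwiches the produced set between the smaller and the larger of $o_i,o_{i+1}$. Faulty processes can add an arbitrary part of the difference, as the second example shows, so the result need not be in $O$.
\end{itemize}

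The missing ingredient is a delivery guarantee for information communicated by processes that later crash. For example: every completed \communicate is eventually observed by all correct processes. The informal phrase ``does not suppress'' suggests this, but the C-properties do not state it. With that guarantee, your two-inclusion argument proves the lemma exactly as stated. Without it, you can only prove the lemma for executions in which every leader that communicates an $\outputseti$ is correct, such as the crash-free executions of \zcref{lem:async-resil:leader-traversal}. That suffices for completeness but not for the safety part of \zcref{thm:connected-sos}.
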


\begin{proof}
Consider an execution with set $\leadOutSets$, let $o=\bigcup \leadOutSets$ be the set of all values appearing in $\leadOutSets$, and let $V=\bigcup O$ be the set of all values appearing in $O$.

By definition, $\leadOutSets$ is the set of all output sets $o'$ that have been communicated by leader processes in some $\outputseti(o')$ at \zcref{line:async-resil:leader-stay}.
By C-Local-Termination and C-Global-Termination, all correct processes eventually observe all of these $\outputseti(o')$.
Since the union of all $o' \in \leadOutSets$ is $o$, and by \zcref{obs:leader-comm-outputset}, we deduce that $o \subseteq V$.
As \jw{$|P_v|>f$} for every $v \in V$, there is at least one correct process $p_v \in P_v$ that outputs $v$ at \zcref{line:async-resil:outputV} upon observing the first $\outputseti(o')$ such that $v \in o'$.
Therefore, as a whole, the execution produces output set~$o$.
\end{proof}





\begin{theorem}[Correctness of \zcref{alg:async-resil}]
\label{thm:connected-sos}
\zcref{alg:async-resil} \ta{solves} any given connected SOS \ta{problem $\pb_O$} with SOS $O$ under asynchrony and up to \jw{$f$} crash failures, assuming \jw{$n \geq |V|(f+1)$}, where $V = \bigcup O$ is the set of all possible output value in $O$.
\end{theorem}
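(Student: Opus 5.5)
To prove \zcref{thm:connected-sos}, we check the two properties that constraint \textit{(i)} of \zcref{def:sos-pb} reduces to, namely \emph{Safety} and \emph{Completeness}, and we dispose of constraint \textit{(ii)} directly. For \textit{(ii)}, note that \zcref{alg:async-resil} never reads the input values of the processes. Hence, for any fixed size $n \geq |V|(f+1)$, the set of output vectors it can produce does not depend on the input vector, so $\map_n = \Vvecin_n \times \Vvecout_n$ holds. The initialization at \zcref{line:async-resil:init} is well defined under this hypothesis: a leader set of size $f+1$ and a partition of $P$ into $|V|$ blocks $P_v$ of size at least $f+1$ both exist.

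\emph{Safety.} Take an arbitrary execution with at most $f$ crashes. By \zcref{lem:async-resil:leader-gives-output-set}, the output set produced is $\bigcup \leadOutSets$. By \zcref{lem:async-resil:1-or-2-output-sets}, $\leadOutSets$ is either $\{o_i\}$ or $\{o_i,o_{i+1}\}$, where $o_i,o_{i+1}$ are consecutive elements of $\CompatibilityWalk_O$.

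In the first case the produced output set is $o_i \in O$.

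In the second case, consecutive elements of a walk in $\OutputGraph(O)$ are either joined by an edge or equal (a walk may be taken without repeated consecutive vertices, or we allow equality harmlessly). By \zcref{def:output-graph}, an edge means $o_i \subset o_{i+1}$ or $o_{i+1} \subset o_i$. Therefore $o_i \cup o_{i+1} \in \{o_i,o_{i+1}\} \subseteq O$.

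This inclusion step is the crux of the argument: it is exactly where connectedness, through the comparability edges, is used. Everything else follows from the preceding lemmas.

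We also need processes in $P_v$ never to output a value outside the union. This holds because $p_v$ outputs $v$ only upon observing some $\outputseti(o)$ with $v \in o$, and by C-Validity such an $o$ belongs to $\leadOutSets$. This gives the inclusion $\subseteq$; \zcref{lem:async-resil:leader-gives-output-set} supplies the reverse inclusion $\supseteq$.

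\emph{Completeness.} Fix any supported $n$ and any $o \in O$. Since $\CompatibilityWalk_O$ visits every vertex of the connected graph $\OutputGraph(O)$, there is an index $i$ with $o_i = o$. By \zcref{lem:async-resil:leader-traversal}, some execution has $\leadOutSets = \{o\}$. That execution is crash-free, hence admissible. Combining it with \zcref{lem:async-resil:leader-gives-output-set} shows that it produces exactly output set $o$.

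Finally, the remark that $O=\{\varnothing\}$ is handled trivially covers the degenerate case. The only remaining subtlety is that the proof of \zcref{lem:async-resil:leader-gives-output-set} requires some correct process in each $P_v$. This is guaranteed by $|P_v|>f$, which holds regardless of which $f$ processes crash, including leaders. This concludes the plan.
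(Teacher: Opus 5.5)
Your outline follows the paper's proof step for step. Safety comes from \zcref{lem:async-resil:1-or-2-output-sets}, \zcref{lem:async-resil:leader-gives-output-set} and comparability of consecutive walk elements, and completeness from the crash-free execution of \zcref{lem:async-resil:leader-traversal}. Your additions (constraint \textit{(ii)}, and $S \subseteq \bigcup\leadOutSets$ via C-Validity, where $S$ is the produced output set) are fine. The gap is the reverse inclusion $\bigcup\leadOutSets \subseteq S$, which you take from \zcref{lem:async-resil:leader-gives-output-set}. That lemma's proof applies C-Local-Termination to every leader that communicated an $\outputseti$. But C-Local-Termination only concerns information communicated by a \emph{correct} process, and C-Global-Termination only concerns information observed by a \emph{correct} process. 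If every leader that communicated $\outputseti(o_{i+1})$ crashes, that information may reach no correct process. It may still be observed by a faulty $q\in P_w$, which outputs $w$ and then crashes; its output still counts, as in the appendix's state model. All you actually get is $\bigcup\mathcal{L}_c \subseteq S \subseteq \bigcup\leadOutSets$, where $\mathcal{L}_c\neq\varnothing$ is the set of output sets observed by some correct process. So $S$ can lie strictly between the comparable sets $o_i$ and $o_{i+1}$, and your comparability step proves $o_i\cup o_{i+1}\in O$, not $S\in O$. The subtlety you isolate at the end (a correct process in each $P_v$) is not the one that matters. What is missing is a correct process that \emph{observes} each communicated output set.

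This breaks safety of \zcref{alg:async-resil} as written. Take $O=\{\{1\},\{1,2,3\}\}$ (connected), the walk $(\{1\},\{1,2,3\})$, and $f=2$. Leader $\ell_1$ observes all $\movei(2)$ in time, reaches iteration $2$, communicates $\outputseti(\{1,2,3\})$ and crashes. The other leaders time out at iteration $1$ and communicate $\outputseti(\{1\})$. Some $q\in P_2$ observes $\outputseti(\{1,2,3\})$, outputs $2$ and crashes, and nobody else ever observes it. All C-properties hold, a correct process of $P_1$ outputs $1$, nobody outputs $3$, and $S=\{1,2\}\notin O$. Since leaders count their own $\movei$, self-observation is allowed, so one crash suffices if $\ell_1\in P_2$. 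The same gap is in the paper's proof of \zcref{lem:async-resil:leader-gives-output-set}.

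There are two ways to close it. One is to assume a uniform C-Global-Termination: if \emph{any} process observes $I$, every correct process eventually does. Then $S$ is the union of the output sets observed by anyone, a nonempty subset of $\leadOutSets$. The other is to add an echo step. Every process, even after it has output, relays once each output set it observes (from a leader or a relay), and $p_v$ outputs $v$ only after seeing relays of some $o\ni v$ from $f+1$ distinct processes. One relayer is then correct, and $n-f\geq f+1$ because $n\geq 2(f+1)$ when $|V|\geq 2$; for $|V|=1$ no intermediate set exists. Hence a set accepted by any process is accepted by all correct ones, $S$ is the union of a nonempty subset of $\leadOutSets$, and your comparability argument concludes.
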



\begin{proof}
\ta{We prove safety and completeness of \zcref{alg:async-resil}.}

\sloppy
\paragraph{Safety.}
Let us consider an arbitrary execution of \zcref{alg:async-resil} with set $\leadOutSets$.
By \zcref{lem:async-resil:1-or-2-output-sets}, there are only two cases: \textit{(i)} $\leadOutSets=\{o_i\}$, or \textit{(ii)} $\leadOutSets=\{o_i,o_{i+1}\}$, where $o_i,o_{i+1}$ are two subsequent elements in $\CompatibilityWalk_O$.
\zcref{lem:async-resil:leader-gives-output-set} shows that the output set produced by the execution is $o=\bigcup \leadOutSets$.
For case \textit{(i)}, the output set of the execution is $o=\bigcup \{o_i\}=o_i$, which is a valid output set of $O$.
For case \textit{(ii)}, as $o_i,o_{i+1}$ are subsequent elements in $\CompatibilityWalk_O$, then they are linked by an edge in the SOS graph $\OutputGraph(O)$.
By \zcref{def:output-graph}, one of these two sets must include the other.
\ta{W.l.o.g.}, let us assume that $o_i \subset o_{i+1}$.
Thus, the output set of the execution is $o=\bigcup \{o_i,o_{i+1}\}=o_{i+1}$, which is also a valid output set of $O$.
Therefore, any execution of \zcref{alg:async-resil} produces a valid output set.



\paragraph{Completeness.}
By \zcref{lem:async-resil:leader-traversal}, for every output set $o \in O$, there is an execution in which $\leadOutSets = \{o_i\}$.
By \zcref{lem:async-resil:leader-gives-output-set}, the output set produced by this execution is $o_i$.
%
\end{proof}





\section{The Case of Non-Resilient SOS \ta{Problems}} \label{sec:non-resil}

We provide in this section the remaining results of our decidability study of SOS \ta{problems}, namely: \textit{(i)} that no disconnected SOS \ta{problem} can be solved under asynchrony and crashes, \textit{(ii)} that all SOS \ta{problems} can be solved asynchronously without crashes with \zcref{alg:async-non-resil}, and \textit{(iii)} that $n \geq \max\{|o| : o \in O\}$ and \jw{$f=0$} are tight conditions to \ta{solve} disconnected SOS \ta{problems} asynchronously.

The first result is summarized in the following theorem, whose proof can be found in \zcref{sec:disconnected}.

\begin{theorem}[Non-resilience of disconnected SOS \ta{problems}] \label{thm:disconnected}
A disconnected SOS \ta{problem $\pb_O$} can be \ta{solved} asynchronously only if \jw{$f=0$}.
\end{theorem}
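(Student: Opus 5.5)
We argue by contradiction, in the style of FLP and Moran--Wolfstahl, using a valence argument over the components of $\OutputGraph(O)$. Suppose some algorithm $A$ solves a disconnected SOS problem $\pb_O$ asynchronously while tolerating $f \geq 1$ crashes, and fix a system size $n$ supported by $A$. Inputs are irrelevant by constraint \textit{(ii)}, so we fix any input vector. Partition $O$ into the connected components $C_1,\dots,C_m$ of $\OutputGraph(O)$, with $m \geq 2$.

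\textbf{Step 1: valence of finite prefixes.} For a finite execution prefix $\alpha$, let $\valence(\alpha)$ be the set of components $C_j$ that contain the output set of some complete execution extending $\alpha$. Safety says every complete execution produces a set in $O$, hence lies in exactly one component. Completeness says all of $O$ is produced, so the initial configuration is multivalent ($|\valence|\geq 2$). The key safety fact is this: if a prefix has already produced the values $S$, then every extension ends with a set $o\supseteq S$ in $O$.

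\textbf{Step 2: a critical prefix.} We then build the standard critical prefix, i.e.\ a multivalent prefix $\alpha$ all of whose one-step extensions are univalent. If no such prefix exists, we obtain an infinite fair execution in which every prefix stays multivalent, which is the usual FLP construction of scheduling each pending process or observation fairly. In that execution some correct process never outputs, or the outputs never settle. To turn this into a contradiction we need $A$ to eventually output, and the specification does not require termination. We handle this by noting that the output set of an infinite execution is the union of outputs in its prefixes; since $\ValUniv$ is finite, that union stabilises after a finite prefix $\beta$. By fairness $\beta$ can be completed, and every completion must still land in a single component, because adding any value to $\beta$'s outputs must keep the set in $O$. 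This contradicts $\beta$ being multivalent, provided we choose $\beta$ multivalent at a point after stabilisation.

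\textbf{Step 3: the crash.} At a critical $\alpha$, two steps $e$ and $e'$ lead to different components $C$ and $C'$. These steps consist of observations or local steps by processes. If they are by different processes, they commute, which gives the usual contradiction. If they are by the same process $p$, we crash $p$ right after $\alpha$; this is allowed since $f\geq 1$. The resulting execution is indistinguishable to every other process from executions where $p$ took $e$ or $e'$ but was slow. Here is the crux: $p$ may already have output a value $v$ in $\alpha$, but then both extensions contain $v$, which is symmetric and causes no problem. With $p$ crashed, the rest of the system produces some set $o$. Then $o\cup S_e$ lies in $C$ and $o\cup S_{e'}$ lies in $C'$, where $S_e$ and $S_{e'}$ are $p$'s possible additional outputs (at most one value each). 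If $p$ outputs nothing after $\alpha$, then $o$ itself belongs to both $C$ and $C'$. Otherwise $o\subseteq o\cup\{v\}$, so $o$ and $o\cup\{v\}$ are adjacent or equal in $\OutputGraph(O)$ whenever $o\in O$, which forces $C=C'$. We must still argue that $o$ (the outcome of the crashed execution) is in $O$ and is comparable to both extended sets. Inclusion gives exactly the edges of the SOS graph, and this is precisely why connectedness by $\subset$ is the right invariant.

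The main obstacle is Step 2, because termination is not guaranteed and valence must be defined through eventual output sets of infinite fair executions. I would try first to handle this via finiteness of $\ValUniv$, which makes ``final output set'' well defined on each execution, and then to argue with the stabilisation prefix as above.
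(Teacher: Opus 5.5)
\textbf{The gap is in Step 2}, at the sentence ``if no such prefix exists, we obtain an infinite fair execution in which every prefix stays multivalent.'' Your Step 3 rules out prefixes that are multivalent with all one-step extensions univalent. That only tells you that every multivalent prefix has \emph{some} multivalent one-step extension. Iterating this gives an infinite multivalent chain, but you have no control over which process or pending observation is scheduled at each step. The limit may therefore starve more than $f$ processes, or leave communicated information unobserved forever. Such a limit need not be an admissible execution, so safety says nothing about its output set, and your stabilisation argument, which needs $S_\infty \in O$, has nothing to apply to.

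What ``the usual FLP construction'' actually rests on is FLP's Lemma~3: for a multivalent prefix $\alpha$ and a designated pending step $e$ of a process $p$, some $\gamma$ reachable from $\alpha$ without applying $e$ has $\gamma e$ multivalent. Its proof needs a different critical pair from yours, in three steps. First, assuming every such $\gamma e$ is univalent, show that at least two components occur among them. Second, pick neighbours $\gamma$ and $\gamma e'$, both reachable without $e$, such that $\gamma e$ and $\gamma e' e$ lie in distinct components. Third, apply the commute-or-crash dichotomy to $e$ and $e'$ at $\gamma$. Your crash argument transfers to this pair. With $p$ crashed right after $\gamma$, the other processes produce some $o \in O$ (by safety, as $f \geq 1$). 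Crashing $p$ right after $\gamma e$, respectively right after $\gamma e' e$, yields admissible completions whose output sets contain $o$, so both components equal the component of $o$. Applying Lemma~3 repeatedly to the oldest pending step gives the fair multivalent execution, and your stabilisation argument then finishes the proof.

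\textbf{Comparison with the paper.} The paper never meets the fairness issue. It works in an axiomatic model where states are sets of events, and its Termination axiom excludes infinite chains of states outright, fair or not. A one-step critical state therefore exists immediately. The paper also handles the same-process case without any crash or indistinguishability argument. Its Resilience axiom supplies a third one-step extension by some $p'' \neq p$. That extension's valence is connected, hence lies inside a single component. The diamond property, applied to it and to whichever of the two original extensions lies in another component, yields a state with empty valence, a contradiction. Your operational route, once repaired with Lemma~3, has the merit of not postulating termination: the contradiction comes from safety applied to an infinite fair execution. This fits the paper's own remark that the specification does not require global termination.
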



We now present \zcref{alg:async-non-resil}, a universal algorithm that \ta{solves} any SOS \ta{problem $\pb_O$} under asynchrony assuming $n \geq \max\{|o| : o \in O\}$, $n \geq 1$, and \jw{$f=0$} (no crash).
The algorithm is instantiated by providing the SOS $O$ as a parameter (\zcref{line:async-non-resil:instant}).
At the initialization of the algorithm (\zcref{line:async-non-resil:init}), a leader process $p_\leader \in P$ is selected.
We also let $m$ be the size of the biggest output set in $O$, and create a partition of all processes $P_1,...,P_m$ containing $m$ nonempty subsets (we can create this partition as we assume $n \geq \max\{|o| : o \in O\}=m$).
Since $P$ is nonempty, \ta{the existence of a leader is guaranteed}.

First, for every output sets $o \in O$, the leader process $p_\leader$ communicates $\choicei(o)$ at \zcref{line:async-non-resil:iteratePATH}.
Then, $p_\leader$ waits for observing the first $\choicei(o')$ at \zcref{line:async-non-resil:pLeader-observe}, and finally, $p_\leader$ communicates $\outputseti(o')$  at \zcref{line:async-non-resil:pLeader-comm}.
Besides, every process \ta{$p \in P_i$} (including $p_\leader$) first waits for the $\outputseti(o)$ communicated by $p_\leader$ (\zcref{line:async-non-resil:wait}).
Then, \ta{$p$} computes its output value $v_i \in o$ based on its subset index $i$ in the partition ($1 \leq i \leq m$): more precisely, \ta{$p$} sorts all values in $o$ and takes the $((i \bmod |o|)+1)$-th one in this sequence (\zcref{line:async-non-resil:choose-v}).
Finally, $p_i$ outputs $v_i$ (\zcref{line:async-non-resil:outputV-output}).

\begin{algorithm}[b]
\footnotesize
\InstParam{SOS $O$.} \label{line:async-non-resil:instant}
\smallskip

\Init{pick some leader process $p_\leader \in P$, let $m=\max\{|o| : o \in O\}$, create a partition $P_1,\cdots,P_m$ of $P$ s.t. $\Forall i \in [1..m], |P_i| \geq 1 $.
} \label{line:async-non-resil:init}
\smallskip

\UniProcCode{$p_\leader$}{ \label{line:async-non-resil:pLeader}
    \lForAll{$o \in O$}{%
        \communicate $\choicei(o)$; \label{line:async-non-resil:iteratePATH}
    }
    \wait $p_\leader$ \observes first $\choicei(o')$; \\ \label{line:async-non-resil:pLeader-observe}
    \communicate $\outputseti(o')$. \label{line:async-non-resil:pLeader-comm}
}
\smallskip

\ProcCode{\ta{$p \in P_i, i \in [1..m]$}}{ \label{line:async-non-resil:outputV}
    \ta{\wait $p$} \observes $\outputseti(o)$; \label{line:async-non-resil:wait} \\
    $v_i \gets ((i \mod |o|)+1)$-th largest value in $o$; \label{line:async-non-resil:choose-v} \\
    $\ooutput$ $v_i$. \label{line:async-non-resil:outputV-output}
}
\caption{Asynchronous algorithm implementing all possible \ta{SOS problems $\pb_O$ with} SOS $O$ assuming $n \geq \max\{|o| : o \in O\}$, $n \geq 1$, and \jw{$f=0$}.}
\label{alg:async-non-resil}
\end{algorithm}

The correctness of \zcref{alg:async-non-resil} is given in \zcref{thm:non-resil}.
We begin with an intermediary lemma.

\begin{lemma} \label{lem:comm-is-os}
If $p_\leader$ communicates $\outputseti(o)$ at \zcref{line:async-non-resil:pLeader-comm}, then the execution produces output set $o$.
\end{lemma}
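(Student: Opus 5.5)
The plan is to show two inclusions between $o$ and the set $\SV(\Vout)$ of values actually output in the execution. Throughout we use $f=0$: every process is correct, and the communication properties deliver every communicated item to every process.

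First, we establish that $\outputseti(o)$ is the only $\outputseti$ information ever communicated. By construction, only $p_\leader$ communicates $\outputseti$ (\zcref{line:async-non-resil:pLeader-comm}), and it does so at most once, right after observing its first $\choicei(o')$. We also note that $o \in O$. The reason is that every $\choicei$ information was communicated by $p_\leader$ at \zcref{line:async-non-resil:iteratePATH} with an argument in $O$, and C-Validity guarantees that observed information was previously communicated. Since $O \neq \{\varnothing\}$ (assumed in this section), and because of how $p_\leader$ picks $o$, it would be good to also handle $o=\varnothing$. I expect to treat that case separately: if $o=\varnothing$, the index computation is ill-defined. I would simply argue that $\varnothing$ output means no process outputs (reading the algorithm charitably), so the produced set is $\varnothing = o$.

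Second, I handle the main case $o\neq\varnothing$. Since $f=0$, $p_\leader$ is correct. By C-Local-Termination, some correct process observes $\outputseti(o)$, and by C-Global-Termination every process observes it. Every process $p \in P_i$ therefore passes the \wait at \zcref{line:async-non-resil:wait} with this same $o$ (it is the unique $\outputseti$, and C-Integrity ensures it is observed once). Each such $p$ then outputs the $((i \bmod |o|)+1)$-th value of $o$. This gives $\SV(\Vout) \subseteq o$ immediately.

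The step requiring actual care is the converse inclusion $o \subseteq \SV(\Vout)$. Here we use the partition $P_1,\dots,P_m$ with $m=\max\{|o''| : o''\in O\} \ge |o|$ and each $P_i$ nonempty. As $i$ ranges over $1,\dots,m \ge |o|$, the residues $i \bmod |o|$ cover all of $\{0,\dots,|o|-1\}$, so the indices $(i \bmod |o|)+1$ cover $\{1,\dots,|o|\}$. Picking one (correct) process in each $P_i$, every value of $o$ is output by some process. Combining the two inclusions gives $\SV(\Vout)=o$, i.e., the execution produces output set $o$.
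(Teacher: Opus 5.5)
Your proof is correct and follows essentially the same route as the paper's: every (correct) process observes the unique $\outputseti(o)$, so only values of $o$ are output, and the nonempty blocks $P_1,\dots,P_m$ with $m \geq |o|$ ensure every value of $o$ is output. Your version is slightly more careful than the paper's on three points. You justify $o \in O$ (hence $|o| \leq m$) via C-Validity. You argue that $i \mapsto (i \bmod |o|)+1$ is surjective onto $[1..|o|]$, whereas the paper claims $P_i$ picks the $i$-th smallest value, which is not literally what the formula gives. You also flag the ill-defined case $o=\varnothing$, which the paper does not address.
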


\begin{proof}
Let us assume that process $p_\leader$ communicates $\outputseti(o)$ at \zcref{line:async-non-resil:pLeader-comm}.
By C-Local-Termination, C-Global-Termination, and the fact that there are no crashes, all processes eventually observe $\outputseti(o)$ and pass the \wait statement at \zcref{line:async-non-resil:wait}.
By C-Validity and the fact that only $p_\leader$ communicates information in the algorithm, $\outputseti(o)$ is the only information observed by the processes.
We can first observe that the execution cannot produce output values that are not in $o$, by \zcref{line:async-non-resil:outputV}. 

Since all subsets in the partition $P_1,...,P_m$ are nonempty and no process crashes, all processes \ta{$p^1 \in P_1$} pick the smallest value $v_1 \in o$, all processes \ta{$p^2 \in P_2$} pick the second smallest value $v_2$, \etc, and finally, all processes \ta{$p^{|o|} \in P_{|o|}$} pick the biggest value $v_{|o|} \in o$ (recall that $|o| \leq m$).
After that, all processes \ta{$p \in P_i,i \in [1..|o|]$} output $v_i$ at \zcref{line:async-non-resil:outputV}. 
Therefore, as a whole, all processes of $P_1,...,P_{|o|}$ output all values in $o$, and the execution produces the output set $o$.
\end{proof}

\begin{theorem} \label{thm:non-resil}
\zcref{alg:async-non-resil} \ta{solves} any SOS \ta{problem $\pb_O$} with SOS $O$ under asynchrony, assuming $n \geq \max\{|o| : o \in O\}$, and \jw{$f=0$}.  
\end{theorem}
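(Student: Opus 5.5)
We prove the statement by establishing the two properties required by constraint \textit{(i)}, safety and completeness, and we handle constraint \textit{(ii)} by noting that \zcref{alg:async-non-resil} never reads process inputs. The central tool is \zcref{lem:comm-is-os}: whenever $p_\leader$ communicates $\outputseti(o')$, the execution produces exactly $o'$. It therefore suffices to show two things. First, in every execution $p_\leader$ communicates exactly one $\outputseti(o')$, and it does so with some $o' \in O$. Second, for every $o \in O$ there is an execution in which this $o'$ equals $o$.

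\paragraph{Safety.}
Since $f=0$, $p_\leader$ is correct. We may assume $O \neq \varnothing$ (and treat $O=\{\varnothing\}$ as the trivial case set aside earlier). Then $p_\leader$ communicates $\choicei(o)$ for every $o \in O$ at \zcref{line:async-non-resil:iteratePATH}. By C-Local-Termination, with $p_\leader$ the only process and hence the only candidate when $n=1$, some correct process eventually observes one of them. By C-Global-Termination, $p_\leader$ itself eventually observes a first $\choicei(o')$.

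By C-Validity, $o'$ was communicated, and only $p_\leader$ communicates $\choicei$ values, so $o' \in O$. The process $p_\leader$ then communicates $\outputseti(o')$ exactly once at \zcref{line:async-non-resil:pLeader-comm}. By \zcref{lem:comm-is-os}, the output set is $o' \in O$.

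One care point is the case $o'=\varnothing$. Then $|o|=0$ and the index computation $(i \bmod |o|)$ at \zcref{line:async-non-resil:choose-v} is undefined. We read the algorithm as having processes output nothing in this case, which yields the output set $\varnothing \in O$; \zcref{lem:comm-is-os} is meant to cover this case as well.

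\paragraph{Completeness.}
Fix a system size $n \geq \max\{|o| : o\in O\}$ and a target $o \in O$. The model places no bound on delays, and any observation ordering consistent with the C-properties is admissible. We therefore build a crash-free execution in which $p_\leader$ observes $\choicei(o)$ before every other $\choicei(\cdot)$. Such a schedule satisfies all four communication properties, because every piece of information is still eventually observed by everyone. In this execution $o'=o$, and by \zcref{lem:comm-is-os} the output set is $o$.

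This scheduling argument is the main obstacle. It must be justified purely from the abstract communication properties, which constrain only validity and eventual delivery and never the order of delivery. The intended justification is exactly the earlier remark in \zcref{sec:model} that any timing of observations consistent with these properties can occur.

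Constraint \textit{(ii)} then follows because the output vectors produced do not depend on the input vector. Every size-$n$ output vector reachable under one input vector is reachable under every input vector, so $\mapn = \Vvecinn \times \Vvecoutn$.
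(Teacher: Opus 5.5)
Your proof is correct and follows the paper's argument: safety comes from C-Validity placing the leader's first observed $\choicei(o')$ in $O$, followed by Lemma~\ref{lem:comm-is-os}, and completeness comes from an admissible schedule in which the leader observes $\choicei(o)$ first. Your additional points are sound refinements of the same route, not a different one: the explicit liveness argument that the leader eventually observes some $\choicei$, the undefined index at $o'=\varnothing$ (which the paper skips), and constraint (ii) via input-independence.
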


\begin{proof}
We prove 
safety and completeness of \zcref{alg:async-non-resil}. 

\paragraph{Safety.}
Let $o$ be an output set produced by an execution of \zcref{alg:async-non-resil}, we will now show that $o \in O$.
Since there is no crash, $p_\leader$ has observed at \zcref{line:async-non-resil:pLeader-observe} some $\choicei(o')$ that it has communicated before (by C-Validity) and therefore $o' \in O$ (from \zcref{line:async-non-resil:iteratePATH}).
After that, $p_\leader$ communicated $\outputseti(o')$ at \zcref{line:async-non-resil:pLeader-comm}.
By \zcref{lem:comm-is-os}, the execution produces output set $o'$, and $o=o'$.
As $o' \in O$, then $o \in O$.


\paragraph{Completeness.}
For every $o \in O$, there must exist an execution where the first information observed by the leader process $p_\leader$ is $\choicei(o)$ due to asynchrony (since any timing of observations consistent with the communication properties \af{must occur}, see \zcref{sec:model}).
By \zcref{lem:comm-is-os}, the execution must produce the output set $o$.
\end{proof}


Combining the previous results we show that the two conditions $n \geq \max\{|o| : o \in O\}$ and \jw{$f=0$} are necessary and sufficient to implement any \textit{disconnected} SOS \ta{problem $\pb_O$} in asynchrony.
In the following, we consider an arbitrary SOS \ta{problem $\pb_O$} with a disconnected SOS $O$, \ie $\OutputGraph(O)$ is a disconnected graph.

\paragraph{Necessity of $\bm{n \geq \max\{|o| : o \in O\}}$ and \jw{$\bm{f=0}$}.}
The necessity of \ta{$f=0$} comes from \zcref{thm:disconnected}.
The necessity of $n \geq \max\{|o| : o \in O\}$ comes from the fact that, if this condition is not satisfied, then there are not enough processes in the system to output all the values of the largest output set $o \in O$, violating completeness.



\paragraph{Sufficiency of $\bm{n \geq \max\{|o| : o \in O\}}$ and \jw{$\bm{f=0}$}.}
If $\OutputGraph(O)$ is a disconnected graph, then it necessarily contains several vertices, and SOS $O$ cannot be $\{\varnothing\}$.
It implies that $n \geq 1$.
Finally, if we assume that conditions $n \geq \max\{|o| : o \in O\}$ and \jw{$f=0$} are satisfied, then we have all the sufficient conditions to use \zcref{alg:async-non-resil} and implement the disconnected SOS \ta{problem $\pb_O$}.

\section{Consensus versus \texorpdfstring{$k$}{k}-Set Agreement}
\label{sec:ConseusVSkSA}

Our decidability results for SOS \ta{problems} have surprising implications, especially in the case of $k$-set agreement~\cite{C93}, a family of \ta{distributed problems where each execution outputs} at least one and at most $k \in \mathbb{N}^*$ different values among the input values.
Consensus is a special case of $k$-set agreement where $k=1$.
It has been shown that $k$-set agreement under asynchrony and up to $k$ crashes is impossible~\cite{BGLR01,HS99,SZ00}, if the validity of the task is considered (\ie decisions are proposals).
However, it is interesting to note that this validity assumption is unnecessary to prove the impossibility of $1$-set agreement/consensus~\cite{FLP85,AFGGNW24}: to preclude trivial implementations, it only suffices to require that consensus may produce different output values, without imposing that the output values are also input values.
This non-triviality property of consensus can be expressed as a \textit{completeness} restriction: for instance, non-trivial binary consensus must produce both output sets $\{0\}$ and $\{1\}$ (as described in the original FLP theorem~\cite{FLP85}).
Our work demonstrates a remarkable discontinuity in $k$-set agreement: its impossibility relies on validity only for $k \geq 2$, but not for $k=1$.
(See \zcref{sec:disconnected} for a generalized impossibility proof of the case $k=1$.)

\subsection{The case of \texorpdfstring{$k$}{k}-set agreement (\texorpdfstring{$k>1$}{k>1})} \label{sec:complete-kSA}

In the following, we define the family of \textit{complete} $k$-set agreement \ta{problems} via our \ta{SOS approach}.
Intuitively, in complete $k$-set agreement, all nonempty output sets of size at most $k$ can be produced, independently of the input values of the execution (\ie the classical $k$-set agreement validity property does not hold).
We then show that, for $k \geq 2$, these \ta{problems} can be solved under any number of crashes, as their SOS graph is necessarily connected.

\begin{definition}[Complete $k$-set agreement] \label{def:kSA-SOS}
Given a universe $\ValUniv$ of at least $k+1$ values, \ta{\emph{complete $k$-set agreement} is an SOS \ta{problem $\pb_O$} with SOS $O = \{o \subseteq \ValUniv : 1 \leq |o| \leq k\}$}.
\end{definition}

For instance, if $\ValUniv=\{1,2,3\}$, the SOS of complete $2$-set agreement is $O=\{\{1\},\{2\},\{3\},\{1,2\},\{1,3\},\{2,3\}\}$.
The SOS graph associated with $O$ is connected, so complete $2$-set agreement can be solved with any number of crashes.
More generally, the following holds.

\begin{corollary}[Solvability of complete $k$-set agreement with $k \geq 2$]
If $k \geq 2$, then complete $k$-set agreement is solvable under any number of crashes \jw{$f$}.
\end{corollary}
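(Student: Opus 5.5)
By \zcref{thm:sos-decidable}, it suffices to show that for $k \geq 2$ the SOS graph $\OutputGraph(O)$ of $O = \{o \subseteq \ValUniv : 1 \leq |o| \leq k\}$ is connected. Solvability under any number of crashes $f$ then follows directly. Concretely, \zcref{alg:async-resil} instantiated with a walk visiting all of $O$ solves the problem whenever $n \geq |\ValUniv|(f+1)$, by \zcref{thm:connected-sos}. For $f = 0$, \zcref{thm:sos-decidable} gives solvability unconditionally.

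The core step is a connectivity argument that routes everything through singletons. First, every $o \in O$ with $|o| \geq 2$ is adjacent to some singleton. Pick any $v \in o$; then $\{v\} \subset o$, and $\{v\} \in O$ since $1 \leq 1 \leq k$. So every vertex lies in the same component as some singleton.

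Second, any two singletons $\{u\},\{v\}$ with $u \neq v$ are connected, via the path $\{u\} - \{u,v\} - \{v\}$. Here $\{u,v\} \in O$ precisely because $k \geq 2$, and both inclusions $\{u\} \subset \{u,v\}$ and $\{v\} \subset \{u,v\}$ are strict. Hence all singletons lie in a single component. Combined with the first step, $\OutputGraph(O)$ is connected.

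The only point needing care is where the hypothesis enters. The whole argument hinges on $\{u,v\}$ belonging to $O$, which fails for $k = 1$: there the singletons form an edgeless graph on at least two vertices, since $|\ValUniv| \geq k+1 \geq 2$. This is consistent with the consensus impossibility. I would also note that $O \neq \{\varnothing\}$ and $O$ is nonempty, so the connected case of \zcref{thm:sos-decidable} applies without invoking the special empty-graph convention.
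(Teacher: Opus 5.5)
Your proof is correct and is essentially the paper's argument: every output set is adjacent to (or is) a singleton, any two singletons are joined through $\{u,v\} \in O$ (the only place $k \geq 2$ is used), and solvability then follows from \zcref{alg:async-resil} via \zcref{thm:connected-sos}, with $V=\bigcup O=\ValUniv$. Your separate treatment of $f=0$ and your remark on $k=1$ are harmless additions.
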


\begin{proof}
Since $k \geq 2$, the universe $\ValUniv$ has at least $3$ values, and $O$ contains every nonempty subset of $\ValUniv$ of size at most $k$.
Hence, every output set $o \in O$ is connected to (or is itself) a singleton $\{v\} \in O$ (property 1), and any two singletons $\{v\},\{v'\} \in O$ are connected via the path $\{v\} \subset \{v,v'\} \supset \{v'\}$, since $\{v,v'\} \in O$ whenever $k \geq 2$ (property 2).
Together, these imply that $\OutputGraph(O)$ is connected, so the result follows from \zcref{alg:async-resil} via \zcref{thm:connected-sos}.
\end{proof}

\subsection{The case of consensus (\texorpdfstring{$k=1$}{k=1})}
The impossibility of asynchronous resilient consensus follows directly from \zcref{thm:disconnected}, as consensus has a disconnected SOS.
Similarly to how we defined \textit{complete} $k$-set agreement using our SOS approach (\zcref{sec:complete-kSA}), we can formally define \textit{complete} consensus as follows.

\begin{definition}[Complete consensus] \label{def:consensus-sos}
\emph{Complete consensus} is an SOS \ta{problem $\pb_O$} with SOS $O$ such that:
\begin{enumerate}[leftmargin=*]
    \item $\Forall o \in O: |o|=1$, that is, every execution of consensus produces only one output value;

    \item $|O| \geq 2$, that is, consensus has at least two different executions that produce different output values.
\end{enumerate}
\end{definition}

The first item of \zcref{def:consensus-sos} imposes that the system always agrees on one single output value, while the second item precludes trivial implementations that always output the same value across all executions.
We can see that complete $k$-set agreement (\zcref{def:kSA-SOS}, page~\pageref{def:kSA-SOS}) boils down to complete consensus (\zcref{def:consensus-sos}) when $k=1$.

\begin{corollary}[Impossibility of asynchronous resilient consensus, FLP theorem~\cite{FLP85}] \label{cor:flp}
Consensus is impossible under asynchrony and one crash.
\end{corollary}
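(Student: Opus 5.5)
The plan is to obtain the corollary directly from \zcref{thm:disconnected}, by showing that complete consensus is a disconnected SOS problem. Any reasonable formalization of consensus solvable under one crash would yield an algorithm solving complete consensus, because consensus has non-triviality and its SOS consists only of singletons. So it suffices to treat complete consensus as in \zcref{def:consensus-sos}.

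\textbf{Step 1.} Let $O$ be the SOS of complete consensus. By item 1 of \zcref{def:consensus-sos}, every $o \in O$ is a singleton $\{v\}$. Two distinct singletons $\{v\} \neq \{v'\}$ are never comparable under strict inclusion. Hence, by \zcref{def:output-graph}, the edge set of $\OutputGraph(O)$ is empty.

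\textbf{Step 2.} By item 2, $|O| \geq 2$. A graph with at least two vertices and no edges is disconnected. This is also not the special case $O=\{\varnothing\}$ treated as connected in \zcref{sec:decidable-sos}, since $\varnothing \notin O$ when all members are singletons.

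\textbf{Step 3.} Apply \zcref{thm:disconnected}: a disconnected SOS problem can be solved asynchronously only if $f=0$. With one crash, $f \geq 1$, so complete consensus is unsolvable, and hence so is consensus.

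The only delicate point is the reduction from classical consensus to complete consensus. An algorithm for consensus with validity, run on inputs drawn from at least two values, produces only singleton output sets. It must produce at least two distinct singletons, since different unanimous inputs force different decisions. Its SOS therefore satisfies \zcref{def:consensus-sos}, and it would implement a member of $\pb_O$, provided we ignore constraint \textit{(ii)}. That gap is harmless: the impossibility proof of \zcref{thm:disconnected} only uses safety and completeness of the produced output sets. The graph-theoretic part is immediate.
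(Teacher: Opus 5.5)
Your Steps 1--3 are the paper's proof: all members of $O$ are pairwise incomparable singletons and $|O| \geq 2$, so $\OutputGraph(O)$ is disconnected and \zcref{thm:disconnected} excludes $f>0$. The paper then just invokes that complete consensus is weaker than validity-based consensus. The problem is the justification you add for that transfer.

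It is not true that the proof of \zcref{thm:disconnected} uses only safety and completeness. Constraint \textit{(ii)} of \zcref{def:sos-pb} is exactly what \zcref{lem:disco-state-exists} uses to conclude that \emph{every} input state has SOS valence $O$ and is therefore disconnected. That lemma is the only source of the disconnected state from which the critical state is built. For a validity-based consensus algorithm the lemma fails: the input state in which all proposals equal $v$ has valence $\{\{v\}\}$, which is connected. The axioms alone then give you no disconnected (bivalent) state to start from. Obtaining one is FLP's separate initial-bivalence step: a chain of input vectors differing in one entry, plus an execution in which the process whose input differs takes no steps. That step cannot be derived from the asynchrony, resilience and termination axioms of the appendix, none of which relates extensions of different input states. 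So an algorithm ``ignoring constraint \textit{(ii)}'' is simply not covered by \zcref{thm:disconnected}.

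To close the gap, either add the initial-bivalence argument, or make the reduction genuine by removing the dependence on inputs. The latter is easy in this model. Starting from a validity-based consensus algorithm $A$:
\begin{enumerate}
\item Each process ignores its actual input, communicates $\choicei(0)$ and $\choicei(1)$, and proposes to $A$ the value carried by whichever of the two it observes first. A correct process eventually observes its own information by C-Local-Termination and C-Global-Termination.
\item Any observation order consistent with the communication properties can occur, as in the completeness part of \zcref{thm:non-resil}. Hence every proposal vector is reachable from every actual input vector, and the new algorithm has a full input/output mapping.
\item Agreement makes all its output sets singletons. Validity on unanimous proposals, with $n>f$ so that some process decides, yields both $\{0\}$ and $\{1\}$.
\end{enumerate}
The new algorithm therefore solves complete consensus, and \zcref{thm:disconnected} applies legitimately.
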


\begin{proof}
By \zcref{def:consensus-sos}, complete consensus \ta{$\pb_O$} has an SOS $O$ which contains at least two different singletons $\{v\},\{v'\} \in O$.
Neither of these two singletons can include the other, hence $\OutputGraph(O)$ is a disconnected graph.
Therefore, \zcref{thm:disconnected} applies, which entails that complete consensus cannot be solved asynchronously with \jw{$f>0$}.
As complete consensus (\zcref{def:consensus-sos}) is an even weaker problem than validity-based consensus (which must also guarantee that every execution produces an output value taken from the input values), the impossibility of validity-based consensus follows.
\end{proof}




\section{The \texorpdfstring{$d$}{d}-Disagreement Problem} \label{sec:disagreement}

In \zcref{sec:algorithm-1} we saw that all connected SOS \ta{problems} can be implemented using \zcref{alg:async-resil}.
However, this algorithm makes a strong system assumption, namely, that there is at least one correct process for every possible output value.
This raises a new question: Can we identify some subclasses of connected SOS \ta{problems} with better implementability conditions?
We now turn to this question for a natural family of connected SOS \ta{problems}, called $d$-disagreement, that requires every execution to produce exactly $d \geq 1$ distinct output values.
Formally, a $d$-disagreement SOS \ta{problem $\pb_O$} has SOS $O=\{\{v_1,...,v_d\}\}$.
As $d$-disagreement seeks to bound the \textit{minimum} number of different decision values, it belongs to the \textit{symmetry breaking} family of distributed problems (in contrast to \textit{agreement} problems, which seek to bound the \textit{maximum} number of different decision values).

In the following, we first present in \zcref{sec:disag-impo} an impossibility proof on the implementability of $d$-disagreement, namely, that it can be implemented under asynchrony only if \jw{$n \geq \sum_{i=1}^d \big\lceil \frac{f+1}{i} \big\rceil$}.
Notice that this bound approximates (up to the iterated rounding) to the $d$-th harmonic number times \jw{$f+1$}.
We then provide in \zcref{sec:disag-impl} an asynchronous implementation of $d$-disagreement that assumes \jw{$n \geq d\big\lceil \frac{f+1}{2} \big\rceil + \big\lfloor \frac{f+1}{2} \big\rfloor$}.

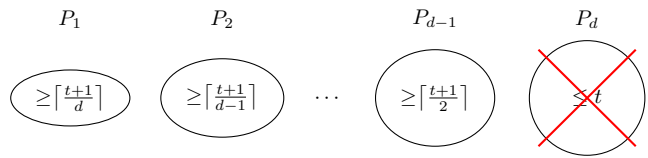
\begin{figure}[b]
\centering
\begin{tikzpicture}[scale=0.8, transform shape]

\tikzmath{
\xPi = 0;
\xPii = 2.5;
\xPdi = 6;
\xPd = 8.5;
\yLbl = 1.3;
\xEllipsis = (\xPii + \xPdi)/2;
\hwCross = .8;
\hhCross = .8;
}

\node[
    ellipse, draw=black
] at (\xPi, 0) {\jw{${\geq}\big\lceil \frac{f+1}{d} \big\rceil$}};
\node at (\xPi, \yLbl) {$P_1$};

\node[
    ellipse, draw=black, minimum height=1.3cm,
] at (\xPii, 0) {\jw{${\geq}\big\lceil \frac{f+1}{d-1} \big\rceil$}};
\node at (\xPii, \yLbl) {$P_2$};

\node at (\xEllipsis, 0) {$\cdots$};

\node[
    ellipse, draw=black,
    minimum height=1.6cm,
] at (\xPdi, 0) {\jw{${\geq}\big\lceil \frac{f+1}{2} \big\rceil$}};
\node at (\xPdi, \yLbl) {$P_{d-1}$};

\node[
    ellipse, draw=black,
    minimum width=1.9cm, minimum height=1.9cm,
] at (\xPd, 0) {\jw{$\leq f$}};
\node at (\xPd, \yLbl) {$P_d$};

\draw[red, semithick] (\xPd - \hwCross, -\hhCross) -- (\xPd + \hwCross, \hhCross);
\draw[red, semithick] (\xPd - \hwCross, \hhCross) -- (\xPd + \hwCross, -\hhCross);

\end{tikzpicture}
\caption{Execution $E$, where all processes in the set $P_d$ crash.}
\label{fig:impo-d-disag}
\end{figure}

\subsection{Impossibility of \texorpdfstring{$d$}{d}-disagreement} \label{sec:disag-impo}

\TA{Generalize this impossibility to all SOS whose minimal output set is of size $d$?}

\begin{theorem}[Necessity of $d$-disagreement] \label{thm:nec-async-d-disag}
Under asynchrony, condition 
\jw{$n \geq \sum_{i=1}^d \big\lceil \frac{f+1}{i} \big\rceil$}
is necessary for implementing the sets of output sets $O = \{\{v_1,\dots,v_d\}\}$.
\end{theorem}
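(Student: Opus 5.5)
We argue by contradiction: assume an algorithm $A$ implements $O=\{\{v_1,\dots,v_d\}\}$ with $n$ processes and up to $f$ crashes, and build an execution whose output set has fewer than $d$ values. The construction is the one sketched in \zcref{fig:impo-d-disag}: we peel off, one value at a time, the group of processes that ``carries'' a given value, ordered so that the rarest value is removed first.

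\textbf{Step 1: a crash-free reference execution.} Consider crash-free executions of $A$ in which every correct process eventually outputs (or stops forever without outputting; both are fine for the counting). Safety forces the produced output set to be exactly $\{v_1,\dots,v_d\}$. Partition the processes that output by their output value into groups $Q_{v_1},\dots,Q_{v_d}$, and any silent processes into a leftover set. Order the values so that $|Q_{(1)}|\ge |Q_{(2)}|\ge\dots\ge |Q_{(d)}|$.

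\textbf{Step 2: the key claim.} We show, by induction on $j=d,d-1,\dots,1$, that the $j$ smallest groups must satisfy
\[
\textstyle\sum_{\ell=d-j+1}^{d} |Q_{(\ell)}| \;\ge\; f+1 .
\]
Take $j$ and suppose instead that the union $S$ of the $j$ smallest groups has $|S|\le f$. Build an indistinguishable execution $E$ in which all processes of $S$ crash at the very start, after they have output nothing. We delay every message of $S$, which is allowed since no timing constraint applies and the C-properties only bind correct processes. The remaining processes cannot distinguish $E$ from the reference run until they have output, so they output exactly their reference values. The output set of $E$ is then $\{v_{(1)},\dots,v_{(d-j)}\}$, of size $d-j<d$, which violates safety.

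Making this indistinguishability rigorous is the main obstacle. The remaining processes may have received messages from $S$ before outputting in the reference run. The fix I would try first is to choose the reference execution itself so that processes outside $S$ decide without observing anything from $S$. This is done by first scheduling a run in which $S$ is merely slow: since $|S|\le f$, the others must still decide, by the termination guarantee for $f$-resilient algorithms. The groups are then defined from that run. This makes the argument a nested, adaptive construction, which is exactly the sequential crashing of $P_d$, then $P_{d-1}$, and so on, suggested by the figure.

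\textbf{Step 3: the counting.} Write the claim for $j=1$ as the statement that the smallest group has size at least $\lceil (f+1)/d\rceil$. The $j$ smallest groups sum to at least $f+1$, and the largest of them is at most the average of the remaining larger ones. From this, the $(d-j+1)$-th group has at least $\lceil (f+1)/j\rceil$ members: otherwise each of the $j$ smallest groups is $<\lceil (f+1)/j\rceil$, hence $\le \lfloor f/j\rfloor$, and their sum is $\le f$, a contradiction. Summing over $j=1,\dots,d$ gives $n\ge \sum_{i=1}^{d}\lceil (f+1)/i\rceil$, as claimed.
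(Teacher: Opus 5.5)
\textbf{There is a genuine gap: your Step~2 claim is false, so the proof cannot be completed along these lines.}

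Take $j=1$. The claim says the smallest group $Q_{(d)}$ has at least $f+1$ members. Since the groups are sorted by size, all $d$ groups then do, which gives $n\ge d(f+1)$. That is strictly more than the bound you want, and it is refuted by \zcref{thm:d-dis-alg} whenever $d\ge 2$ and $f\ge 1$.

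Concretely, run \zcref{alg:async-d-dis-2dt} with $d=2$, $f=1$, $n=3$, $P_1=\{a\}$, $P_2=\{b\}$, $P_?=\{c\}$. Every crash-free run has groups of sizes $2$ and $1$. Suppose the singleton group is $\{b\}$ and you crash $b$ at the start. Then $c$ does not behave as in the reference run: it never observes $\outputi(v_2)$, so it outputs $v_2$ itself.

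Your repair (first make $S$ slow, then read the groups off that run) is circular. In that run $c$ joins $b$'s group, and now $\{a\}$ is the smallest group. No single run, hence no fixed partition of $P$ by output value, has the property you need. The arithmetic in Step~3 is correct given its hypothesis, but it has nothing valid to count.

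\textbf{The missing idea is the paper's adaptive freezing construction.} There, the counted groups are produced stage by stage rather than read off one run.
\begin{enumerate}
\item Let processes run, and freeze each one at the moment it is about to output, so no one can have observed its decision.
\item Once $f+1$ processes are frozen, their pending values lie in $\{v_1,\dots,v_d\}$ by safety. By pigeonhole, the \emph{most} frequent pending value $w_1$ is held by a set $P_1$ with $|P_1|\ge\lceil (f+1)/d\rceil$.
\item Release only $P_1$ and keep the others frozen. Let the system continue, freezing every process about to output a value other than $w_1$, until $f+1$ are frozen again. Now only $d-1$ values are possible, so the most frequent pending value $w_2$ is held by at least $\lceil (f+1)/(d-1)\rceil$ processes. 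Continue in the same way.
\item In the last stage, stop at $f$ frozen processes, all pending on $w_d$, and crash them.
\end{enumerate}
If some refill ever fails, crashing the at most $f$ frozen processes already yields a run whose output set misses a value. Otherwise $P_1,\dots,P_{d-1}$ and the $f$ crashed processes are disjoint and number at least $\sum_{i=1}^d\lceil (f+1)/i\rceil-1\ge n$, so no process is left to output $w_d$.

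This differs from your plan in three ways:
\begin{itemize}
\item Only processes frozen just before outputting are ever crashed, which makes indistinguishability automatic.
\item At each stage the most frequent pending value is released, not the rarest value crashed.
\item In the end only $f$ processes carrying a single value are crashed, rather than unions of whole groups.
\end{itemize}
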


\begin{proof}
For simplicity of presentation, in this proof, we assume the existence of a global clock that is inaccessible to the processes.
By contradiction, let us assume that there is an algorithm $A$ that implements the set of output sets $\{\{v_1,\dots,v_d\}\}$ under asynchrony and \jw{$n < \sum_{i=1}^d \big\lceil \frac{f+1}{i} \big\rceil$}.
We present an execution $E$ of $A$ with \jw{$f$} crashes that violates this assumption.
Execution $E$ is illustrated in \zcref{fig:impo-d-disag}, and corresponds to the concatenation of execution fragments described as follows.

Consider the first fragment of $E$, denoted $E^1$, which starts at time $0$.
Initially, we let the processes run $A$ freely until the first process $p_1^1$ is about to output some value $v_1^1$ at a time $\tau_1^1$. Process $p_1^1$ is frozen at time $\tau_1^1$, so no other process can observe $v_1^1$ (for now).
Then, we let the non-frozen processes run $A$ freely until a second process $p_1^2$ is about to output some value $v_1^2$ at a time $\tau_1^2$, which is also frozen at $\tau_1^2$ so no other process can observe $v_1^2$.
We repeat this process until process \jw{$p_1^{f+1}$} is about to output value \jw{$v_1^{f+1}$} and is frozen at time \jw{$\tau_1^{f+1}$}.
This is the time execution fragment $E^1$ ends.
(Remark that, if there are enough processes, such a process \jw{$p_1^{f+1}$} must exist, since otherwise at most \jw{$f$} processes output values in the execution; then, crashing them instead of freezing them would create an execution that \ta{cannot} produce the output set $\{v_1,\dots,v_d\}$.)
Let $w_1$ be the most frequent value in \jw{$v_1^1, \dots, v_1^{f+1}$}, and $P_1$ be the subset of processes that output $w_1$ in $E^1$.
Observe that \jw{$|P_1| \geq \big\lceil \frac{f+1}{d} \big\rceil$}.
Let us also denote $c_1=|P_1|$.

Now, consider the second fragment $E^2$ starting at time \jw{$\tau_1^{f+1}$}.
Initially, in this fragment, all processes in $P_1$ are thawed, so they can go ahead and output $w_1$.
Then, we let the non-frozen processes run freely $A$ as long as they only output value $w_1$.
We stop when a process $p_2^1$ is about to output some value $v_2^1 \neq w_1$ at a time $\tau_2^2$.
Process $p_2^1$ is then frozen at time $\tau_2^1$, so no other process can observe $v_2^1$.
We repeat this $c_1$ times until there are (again) \jw{$f+1$} frozen processes.
At this time $\tau_2^{c_1}$ is when execution fragment $E^2$ ends.
(By the same argument as before, this must always happen if there are enough processes.)
Let $w_2$ be the most frequent value to be output by the frozen processes and $P_2$ be the subset of frozen processes that output $w_2$. 
Since $w_1$ is not output by any frozen process, we have that \jw{$c_2=|P_2| \geq \big\lceil \frac{f+1}{d-1} \big\rceil$}.


Each execution fragment $E^i$, for $i=3,..., d-1$, is constructed inductively in a similar way. It starts at time $\tau_{i-1}^{c_{i-2}}$  by thawing all processes in $P_{i-1}$, so they can output $w_{i-1}$.
Then, we let non-frozen processes run $A$, while freezing the first $c_{i-1}$ processes that are about to output values not in $\{w_1, w_2, ..., w_{i-1}\}.$
When this happens (at time $\tau_i^{c_{i-1}}$) fragment $E^i$ ends with \jw{$f+1$} frozen processes.
Let $w_i$ be the most frequent value to be output by the frozen processes and $P_i$ be the subset of frozen processes that output $w_i$. 
Since $w_i \notin \{w_1, w_2, ..., w_{i-1}\}$, we have that \jw{$c_i=|P_i| \geq \big\lceil \frac{f+1}{d-i+1} \big\rceil$}.

The final execution fragment $E^d$ starts at time $\tau_{d-1}^{c_{d-2}}$ by thawing all processes in $P_{d-1}$, so they can output $w_{d-1}$.
Observe that all the remaining frozen processes are about to output the only value $w_d \in \{v_1,\dots,v_d\}$ that is not in
$\{w_1,\dots,w_{d-1}\}$.
Then, we let non-frozen processes run $A$, while freezing the first $c_{d-1} - 1$ processes that are about to output value
$w_d$.
When this happens (at time $\tau_d^{c_{d-1}}$) fragment $E^d$ ends with a set $P_d$ of \jw{$c_d=f$} frozen processes.
At that time, we crash the \jw{$f$} processes of $P_d$, so they never output value $w_d$.
This is the time execution $E$ ends, since all processes either have output a value or crashed, given that
\jw{$
\sum_{i=1}^d |P_i| = \sum_{i=1}^d \Big\lceil \frac{f+1}{i} \Big\rceil -1 \geq n.
$ 
}
By construction, the set of values $\{v_1,\dots,v_d\}$ has not been produced in execution $E$.
Hence, we have a contradiction.
\end{proof}

\subsection{Implementation of \texorpdfstring{$d$}{d}-disagreement} \label{sec:disag-impl}

In this section, we present \zcref{alg:async-d-dis-2dt}, an asynchronous implementation of the $d$-disagreement SOS \ta{problem $\pb_O$}, where $O = \{\{v_1,...,v_d\}\}$, assuming \jw{$n \geq d \big\lceil \frac{f+1}{2} \big\rceil+ \big\lfloor \frac{f+1}{2} \big\rfloor$}.
At the initialization of the algorithm, the set of all processes $P$ is partitioned into $d+1$ subsets, $P_1,..., P_d, P_?$, such that \jw{$|P_?| = n-d\big\lceil \frac{f+1}{2} \big\rceil$}, and for every $i \in [1..d]$, \jw{$|P_i| = \big\lceil \frac{f+1}{2} \big\rceil$} (\zcref{line:async-d-dis-2dt:init}).\footnote{
    Note that any construction of $P_1,...,P_d,P_?$ such that the union of two subsets is strictly greater than \jw{$f$} also works.
}
We can easily see that the assumption \jw{$n \geq d \big\lceil \frac{f+1}{2} \big\rceil + \big\lfloor \frac{f+1}{2} \big\rfloor$} is sufficient to instantiate all sets of processes $P_1,...,P_d,P_?$.
Remark that \jw{$|P_?|=n-d\big\lceil \frac{f+1}{2} \big\rceil \geq \big\lfloor \frac{t+1}{2} \big\rfloor$}.
Therefore, this construction guarantees that, in any pair of subsets $P_i,P_j$ of the partition, there is always at least one correct process.\footnote{
    Recall that, $\Forall k \in \mathbb{N}: k = \big\lfloor \frac{k}{2} \big\rfloor + \big\lceil \frac{k}{2} \big\rceil$.
}




%

\begin{algorithm}[tb]
\footnotesize
\Init{create partition $P_1,...,P_d,P_?$ of $P$ s.t. 
 \jw{$|P_?| = n-d \big\lceil \frac{f+1}{2} \big\rceil, \Forall i \in [1..d]: |P_i| = \big\lceil \frac{f+1}{2} \big\rceil$}.%
} \label{line:async-d-dis-2dt:init}
\smallskip

\ProcCode{\ta{$p^i \in P_i, i \in [1..d]$}}{ \label{line:async-d-dis-2dt:Pv}
    $\ooutput$ $v_i$; \label{line:async-d-dis-2dt:out-v} \\
    $\communicate$ $\outputi(v_i)$. \label{line:async-d-dis-2dt:comm}
}
\smallskip

\ProcCode{\ta{$p^? \in P_?$}}{ \label{line:async-d-dis-2dt:pComp}
    $\wait$ until \ta{$p^?$} $\observed$ $d-1$ distinct $\outputi(\star)$; \label{line:async-d-dis-2dt:pComp-wait} \\
    \ta{$V_\obs \gets \{v \mid p^? \text{ \observed\ } \outputi(v)\}$}; \label{line:async-d-dis-2dt:pComp-Vobs} \\
    $\ooutput$ $v \in \{v_1,...,v_d\} \setminus V_\obs$. \label{line:async-d-dis-2dt:pComp-output}
}

\caption{Asynchronous $d$-disagreement algorithm for every SOS \ta{problem $\pb_O$} with $O = \{\{v_1,...,v_d\}\}$, assuming 
\jw{$n \geq d\big\lceil\frac{f+1}{2}\big\rceil+ \big\lfloor\frac{f+1}{2}\big\rfloor$}.}
\label{alg:async-d-dis-2dt}
\end{algorithm}


For every subset $P_i, i \in [1..d]$, a process $p_i \in P_i$ outputs $v_i$ (\zcref{line:async-d-dis-2dt:out-v}) and then communicates $\outputi(v_i)$ (\zcref{line:async-d-dis-2dt:comm}).
Moreover, every process \ta{$p^? \in P_?$} waits until it observes $d-1$ distinct $\outputi(\star)$ (\zcref{line:async-d-dis-2dt:pComp-wait}), and gathers the observed values in the set $V_\obs$ (\zcref{line:async-d-dis-2dt:pComp-Vobs}).
Finally, \ta{$p^?$} outputs the only value that is not present in $V_\obs$ (\zcref{line:async-d-dis-2dt:pComp-output}).

\begin{theorem}[Correctness of \zcref{alg:async-d-dis-2dt}] \label{thm:d-dis-alg}
\zcref{alg:async-d-dis-2dt} implements all $d$-disagreement SOS \ta{problems $\pb_O$} with SOS $O = \{\{v_1,...,v_d\}\}$ under asynchrony and up to \jw{$f$} crash failures, assuming \jw{$n \geq d \big\lceil \frac{f+1}{2} \big\rceil + \big\lfloor \frac{f+1}{2} \big\rfloor$}.
\end{theorem}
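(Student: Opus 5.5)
We prove the two properties required by constraint \textit{(i)}, namely safety and completeness; constraint \textit{(ii)} is immediate because \zcref{alg:async-d-dis-2dt} never reads input values. Since $O=\{\{v_1,\dots,v_d\}\}$ is a singleton, completeness follows from safety once we know every execution produces some output set. So the whole task is safety: in every execution with at most $f$ crashes, the set of produced values is exactly $\{v_1,\dots,v_d\}$. We split this into an inclusion ``$\subseteq$'' and an inclusion ``$\supseteq$''.

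For ``$\subseteq$'', processes in $P_i$ output only $v_i$. A process $p^? \in P_?$ outputs a value of $\{v_1,\dots,v_d\}\setminus V_\obs$. By C-Validity, $V_\obs \subseteq \{v_1,\dots,v_d\}$. We also need the set $\{v_1,\dots,v_d\}\setminus V_\obs$ to be nonempty when $p^?$ reaches \zcref{line:async-d-dis-2dt:pComp-output}. Here we read the wait at \zcref{line:async-d-dis-2dt:pComp-wait} as ``exactly the $d-1$ distinct values observed when the wait is released''. Then $|V_\obs| = d-1$, and exactly one value remains, so the output is well defined.

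For ``$\supseteq$'', we use the pigeonhole property of the partition. Since $|P_i| = \lceil\frac{f+1}{2}\rceil$ and $|P_?| \ge \lfloor\frac{f+1}{2}\rfloor$, any two blocks together contain at least $f+1$ processes, so at most one block of the partition can be entirely crashed. We then distinguish two cases.
\begin{itemize}[leftmargin=*]
\item If every $P_i$ ($i\in[1..d]$) contains a correct process, each such process outputs $v_i$ at \zcref{line:async-d-dis-2dt:out-v}, and all values are produced.
\item Otherwise, exactly one block $P_j$ is entirely crashed, and we may assume none of its processes output $v_j$ (if one did, we are done). Every other $P_i$, $i\neq j$, has a correct process that outputs $v_i$ and communicates $\outputi(v_i)$. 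By C-Local-/C-Global-Termination, every correct process eventually observes these $d-1$ distinct values. In particular, $P_?$ contains a correct process, because $P_?$ and $P_j$ cannot both be fully crashed. That process eventually passes the wait.
\end{itemize}

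The main obstacle is this second case. The correct $p^?$ might pass the wait after observing a different collection of $d-1$ values, for example one including $v_j$ if some process in $P_j$ communicated before crashing. But a process of $P_j$ communicates $\outputi(v_j)$ only after outputting $v_j$, so $v_j$ would then already be produced. Otherwise, $v_j$ was never communicated, so $V_\obs = \{v_1,\dots,v_d\}\setminus\{v_j\}$, and $p^?$ outputs $v_j$.

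This case analysis gives safety under any admissible interleaving. Completeness, global termination of correct processes, and the input-independence needed for constraint \textit{(ii)} then follow as noted above.
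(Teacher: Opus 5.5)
Your proof is correct and follows essentially the same route as the paper. Both arguments rest on the pigeonhole fact that no two blocks of the partition can be entirely crashed, followed by a two-case analysis, and both deduce completeness from safety because $O$ is a singleton; the paper splits on whether $P_?$ contains a correct process rather than on whether every $P_i$ does, which is an equivalent split. Your explicit treatment of a process in the crashed block $P_j$ that communicated $\outputi(v_j)$ before crashing (so $v_j$ is already produced) fills in a step the paper's case \textit{(ii)} leaves implicit, and so does your remark that $V_\obs$ has exactly $d-1$ values when the wait is released.
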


\begin{proof}
We prove the safety and completeness of \zcref{alg:async-d-dis-2dt}.

\paragraph{Safety.}
%
\newcommand{\corrP}{\ensuremath{\ta{\mathcal{P}}}\xspace}
Let us consider an arbitrary execution $E$, and let \ta{$\corrP \subseteq \{P_1,...,P_d,P_?\}$ be the set of all subsets of processes of the partition such that each $P' \in \corrP$ contains at least one correct process $p \in P'$ in execution $E$.}
By construction of \zcref{line:async-d-dis-2dt:init}, it is impossible to crash all the processes of two subsets of $P_1,...,P_d,P_?$, so we have \ta{$d \leq |\corrP| \leq d+1$}.
%
We now consider the following two cases.
\begin{enumerate}[leftmargin=*]
    \item Case \textit{(i)}: $P_? \notin \corrP$.
    In this case, we necessarily have $\corrP=\{P_1,...,P_d\}$, thus there exists at least one correct process $p_i$ in every subset $P_i, i \in [1..d]$ that outputs $v_i$ at \zcref{line:async-d-dis-2dt:out-v}.
    Therefore, such an execution necessarily produces the output set $\{v_1,...,v_d\}$.

    \item Case \textit{(ii)}: $P_? \in \corrP$.
    In this case, there can be at most one subset of $P_1,...,P_d$ that does not belong to $\corrP$.
    Let $\corrP' = \corrP \setminus \{P_?\}$ be the set of subsets of processes in $P_1,...,P_d$ that contain correct processes in execution $E$.
    We necessarily have $|\corrP'| \geq d-1$.
    For every subset $P_i \in \corrP'$, there is at least one correct processes \ta{$p^i \in P_i$} that outputs $v_i$ at \zcref{line:async-d-dis-2dt:out-v} and then communicates $\outputi(v_i)$ at \zcref{line:async-d-dis-2dt:comm}.
    By C-Local-Termination and C-Global-Termination, \ta{all} correct processes \ta{$p^? \in P_?$} eventually observe $d-1$ distinct $\outputi(v)$ in \zcref{line:async-d-dis-2dt:pComp-wait} (by case assumption, there is at least one correct process in $P_?$).
    By construction, the set $V_\obs$ of every correct process \ta{$p^? \in P_?$} therefore contains $d-1$ different values at \zcref{line:async-d-dis-2dt:pComp-Vobs} (though the set $V_\obs$ may differ between correct processes of $P_?$).
    For every correct process \ta{$p^? \in P_?$}, $\{v_1,...,v_d\} \setminus V_\obs$ contains exactly one value $v \in \{v_1,...,v_d\}$.
    Finally, every correct process \ta{$p^? \in P_?$} outputs the only value $v \in \{v_1,...,v_d\} \setminus V_\obs$ (\zcref{line:async-d-dis-2dt:pComp-output}) that has not been observed by \ta{$p^?$}.
    Hence, the execution produces the output set $\{v_1,...,v_d\}$.
\end{enumerate}



\paragraph{Completeness.}
Since there is only one output set in $O$, and safety shows that any execution produces this unique output set, then completeness follows immediately.
\end{proof}

\section{Conclusions} \label{sec:conclusion}

In this paper, we introduce and study SOS \ta{problems}, a novel class of distributed \ta{problems} defined by the set of distinct output sets they produce across their executions.
By departing from the classical constraints of simplicial complexes, we have developed a framework that \ta{accommodates output domains that are not closed under inclusion}.

Our main result establishes that the entire class of SOS \ta{problems} is decidable under asynchrony and any number of crashes.
The decision rule is remarkably simple: an SOS \ta{problem} is always solvable when no process may crash, and it is solvable under any positive number of crashes if and only if its SOS graph (whose vertices are the output sets and whose edges connect pairs related by inclusion) is connected.
This clean dichotomy partitions SOS tasks into those that tolerate arbitrarily many crashes and those that tolerate none at all, with no intermediate regime.

Our decidability result has yielded several noteworthy consequences.
First, it reveals a sharp discontinuity in the landscape of $k$-set agreement: without validity, $k$-set agreement is solvable under any number of crashes for $k \geq 2$, yet becomes impossible under even a single crash when $k = 1$ (consensus).
This highlights that the classical impossibility of $k$-set agreement for $k \geq 2$ fundamentally depends on the validity requirement, a dependency that does not exist for consensus, whose impossibility holds even in its SOS form.
Second, our study of $d$-disagreement, a symmetry-breaking \ta{problem} requiring the system to always produce exactly $d$ distinct output values, has uncovered \ta{an interesting} connection to the harmonic series: the number of processes necessary to solve $d$-disagreement under \jw{$f$} crashes is approximately \jw{$n \geq H_d(f+1)$}, where $H_d$ is the $d$-th harmonic number.

Several directions for future work emerge naturally from this study.
First, closing the gap between the lower bound of \jw{$\sum_{i=1}^{d} \ceil{\frac{f+1}{i}}$} and the upper bound of \jw{$d \ceil{\frac{f+1}{2}} + \floor{\frac{f+1}{2}}$} for $d$-disagreement remains an open problem, and an optimal algorithm would be of both theoretical and practical interest, particularly for fault-tolerant load balancing applications.
Second, the universal algorithm we have presented for connected SOS \ta{problems} (\zcref{alg:async-resil}) requires \jw{$n \geq (f+1)|V|$} processes.
It would be valuable to determine tight resilience bounds, that is, the exact relationship between $n$ and \jw{$f$} for which each connected SOS \ta{problem} becomes solvable.
Third, and perhaps most ambitiously, our work suggests a broader program: extending the decidability analysis to richer classes of \ta{problems}, such as \ta{task problems} defined by their set of output vectors (rather than output sets), or \ta{task problem} that incorporate input-dependent specifications.
By progressively mapping the boundary between decidable and undecidable families, we hope to contribute to an eventual comprehensive theory of distributed decidability, one that plays a role for distributed computing analogous to that of classical computability theory for sequential computing.

\ta{
\section*{Acknowledgments}
Supported by grants DRONAC (PID2022-140560OB-I00) and Excelencia María de Maeztu (CEX2024-001471-M) funded by MICIU/AEI/10.13039/501100011033, and by grant PIPF-2024/COM-34475 funded by the Madrid Regional Government.
The authors would like to thank the anonymous reviewers who helped improve this paper.
}




\newpage

\bibliographystyle{IEEEtranS}
\bibliography{bibliography}

\newpage
\appendix




\section{The unsolvability of disconnected SOS tasks under crashes (Necessity)} \label{sec:disconnected}

In this section, we present \zcref{thm:disconnected}, which generalizes the famous impossibility of asynchronous resilient consensus (FLP theorem~\cite{FLP85}) to a broader family of \ta{problems} with non-connected SOS graphs.
Results similar to \zcref{thm:disconnected} have already been shown~\cite{MW87,BMZ90}, however, they rely on model-specific arguments (esp. message passing) and on a reduction from the consensus problem.
Therefore, the FLP theorem is not a consequence of these results, but a cause.
In contrast, our proof of \zcref{thm:disconnected} is agnostic of the underlying communication medium, and the impossibility is proved from scratch, which allows us to state the FLP theorem as a corollary of our theorem (\zcref{cor:flp}).

Our proof generalizes the axiomatic approach of~\cite{AFGGNW24}: asynchrony, crash resilience, and termination are defined as axioms, and the impossibility of implementation is expressed as a contradiction within this system of axioms.
We also generalize the classical notion of \emph{valence}, introduced in~\cite{FLP85}.
However, unlike~\cite{AFGGNW24,FLP85}, we extend the impossibility proof to any disconnected SOS \ta{problem}.


\paragraph{Preliminaries.}
We now introduce the necessary notions for our proof.

\paragraph{Events.}
An \emph{event} $e_p^\idx$ is an action involving process $p$ at index $\idx$ (\ie $e_p^\idx$ is the $i$\textsuperscript{th} event of $p$).

The only purpose of event indices is to differentiate events on the same process that could swap orders due to asynchrony.
For instance, in asynchronous message passing, two messages could be sent by two different senders $p',p''$ to the same recipient $p$, and due to asynchrony, there could be one execution where $p$ first receives $p'$ and then $p''$, and another execution where $p$ first receives $p''$ and then $p'$.
However, thanks to indices, the reception by $p$ of the message from $p'$ in these two executions is considered as two different events, since the indices of these events are different.
Apart from this technical detail, the event indices are not used in the rest of the proof.

Some special events, the input and output events, respectively, correspond to the input/output of a value $v$ by a process $p$ in the context of the task. These events are respectively noted $\iin_p^v$ and $\out_p^v$.
If the process $p$ and/or index $\idx$ of some event are not relevant, we can omit writing them.

\paragraph{States.}
A \emph{state} $\sstate$ is a set of events.
An algorithm $A$ is described by its set of states $\States$ and set of \emph{output states} $\Omega \subseteq \States$, which are the final states where the algorithm can terminate.
By definition, all maximal states of $\States$ are output states, \ie $\{\sstate \in \States \mid \Nexists \sstate' \in \States, \sstate \subsetneq \sstate'\} \subseteq \Omega$, but let us remark that there can also be output states that are subsets of the maximal states.
Conversely, the \emph{input states} are the states from which the algorithm begins and that have no previous state.
They are given by:
$$
\instates(\States) = \{\sstate \in \States \mid \Nexists \sstate' \in \States: \sstate' \subsetneq \sstate\}.
$$

\begin{definition}[Input/output states of \ta{a task implementation}] \label{def:in-out-impl}
If an algorithm with a set of states $\States$ and a set of output states $\Omega$ implements the \ta{task $\task=(\Vvecin,\Vvecout,\map) \in \pb$ of some problem $\pb$}, then there must be input/output states corresponding to every pair of input/output vectors in \ta{$\task$}:
\ta{\begin{align*}
    \Forall (\Vin,\Vout) &\in \map, \Exists \sstate_\iin \in \instates(\States), \Exists \sstate_\out \in \Omega: \sstate_\iin \subseteq \sstate_\out \\
    \land\; \{&\iin_{p_j}^v \mid v \text{ is the } j^\text{th} \text{ value in } \Vin, v \neq \bot\} = \sstate_\iin \\
    \land\; \{&\out_{p_j}^v \mid v \text{ is the } j^\text{th} \text{ value in } \Vout, v \neq \bot\} \subseteq \sstate_\out.
\end{align*}}
\end{definition}

\begin{definition}[SOS valence] \label{def:sos-valence}
Given a set of states $\States$, a set of output states $\Omega$, and a state $\sstate \in \States$, the \emph{SOS valence} of $\sstate$ is a set of sets of output values given by the function:
\begin{align*}
\valence(\sstate,\States,\Omega) =
\begin{cases}
\{\{v \mid \out_p^v \in \sstate\}\} & \text{if } \sstate \in \Omega, \\
\bigcup_{\sstate' \in \{\sstate' \in \States \mid \sstate \subsetneq \sstate'\}} \valence(\sstate',\States,\Omega) & \text{otherwise.}
\end{cases}
\end{align*}
    
\end{definition}

Informally, the SOS valence of an output state is the singleton containing the output set it produces, and the SOS valence of any other state is the union of the SOS valence of all its extensions.
(Let us note that this definition of valence differs from that of~\cite{AFGGNW24,FLP85}, since here SOS valence is the set of reachable \textit{sets} of output values, and not the set of reachable output values, as executions producing multiple output values are allowed in our context.)

\paragraph{Axioms.}
Here, we define asynchrony, resilience, and termination as axioms that some algorithm must satisfy.
In these definitions, and in the rest of the proof, we use the symbol $\uplus$ to denote the union of two disjoint sets.

\begin{definition}[Asynchrony axiom] \label{def:async}
Given a set of states $\States$, \emph{asynchrony} is defined as:
\begin{align*}
    \Asynchrony(\States) &= \Forall \sstate \in \States: \\ (\sstate {\uplus} \{e_p\}, \sstate &{\uplus} \{e_{p'}\} \in \States, p \neq p') {\implies} (\sstate {\uplus} \{e_p, e_{p'}\} \in \States).
\end{align*}
\end{definition}

\Asynchrony (sometimes referred to as the \emph{diamond property}) requires that if two states differ only in their last respective events, which are from different processes, their union is also a state, as the pace of processes can change across two executions.
We point out that our impossibility proof is agnostic of the communication medium object, as long as the medium satisfies asynchrony as defined above (which is the case for the \communicate/\observe abstraction used in this paper).\footnote{
    For a proof that some classical communication media (such as asynchronous message-passing and atomic memory) satisfy this definition of asynchrony, we refer the interested reader to~\cite{AFGGNW24}.
}



\begin{definition}[Termination axiom] \label{def:termin}
Given a set of states $\States$, \emph{termination} is defined as:
$$
\Termination(\States) = \Forall \sstate \in \States: |\sstate| < +\infty.
$$
\end{definition}

\Termination ensures that an algorithm does not have any state with an infinite number of events, and thus that the output states are reached in a finite number of steps.
One could argue that it is allowed for an algorithm to keep making steps indefinitely after reaching an output state, but without loss of generality, we omit this scenario by removing the states extending an output state.

\begin{definition}[Resilience axiom] \label{def:resil}
Given a set of states $\States$ and set of output states $\Omega \subseteq \States$, \emph{resilience} is defined as:
\begin{align*}
    \Resilience(\States,\Omega) &= \Forall \sstate \in \States \setminus \Omega, \\ &\Exists \sstate \uplus \{e_{p}\}, \sstate \uplus \{e_{p'}\} \in \States: p \neq p'.
\end{align*}
\end{definition}

\Resilience imposes that at least two distinct processes can extend every state that is not an output state.
If this is not the case, that is, if a state that is not an output state can only be extended by at most one process, then crashing this process would make the algorithm stop progressing, and thus never reach termination.


\paragraph{Impossibility proof.}
Before proceeding to the impossibility theorem, we first establish some preliminary results.
For notational simplicity, given a set of states $\States$ and a set of output states $\Omega$, we say that a state $\sstate \in \States$ is \emph{disconnected} if its SOS valence has an associated SOS graph $\OutputGraph(\valence(\sstate,\States,\Omega))$ that is disconnected, otherwise we say that state $\sstate$ is \emph{connected}.

\begin{observation} \label{obs:empty-notin-disconnected}
Given a set of states $\States$ and a set of output states $\Omega \subseteq \States$, by \zcref{def:output-graph}, a disconnected $\sstate$ cannot contain the empty set in its SOS valence $\valence(\sstate,\States,\Omega)$, because $\varnothing$ is included in any other set, which would make $\sstate$ connected.
\end{observation}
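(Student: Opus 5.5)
This observation needs only a short argument by contraposition: if $\varnothing \in \valence(\sstate,\States,\Omega)$, then $\sstate$ is connected. Write $O_\sstate = \valence(\sstate,\States,\Omega)$.

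\textbf{Case $O_\sstate = \{\varnothing\}$.} Here $\OutputGraph(O_\sstate)$ has a single vertex and no edges. A one-vertex graph is trivially connected, and the paper anyway treats the degenerate SOS $\{\varnothing\}$ as connected. So $\sstate$ is connected.

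\textbf{Case $O_\sstate \neq \{\varnothing\}$.} Take any vertex $o \in O_\sstate$ with $o \neq \varnothing$. Since $o$ is nonempty, $\varnothing \subset o$ with strict inclusion. By \zcref{def:output-graph}, $\{\varnothing, o\}$ is therefore an edge of $\OutputGraph(O_\sstate)$.

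Hence every vertex of the graph is either $\varnothing$ itself or adjacent to $\varnothing$. The graph thus contains a spanning star centred at $\varnothing$, so any two vertices $o,o'$ are joined by the path $o \supset \varnothing \subset o'$, of length at most two. The graph is connected, so $\sstate$ is connected.

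Taking the contrapositive gives the observation: a disconnected state cannot have $\varnothing$ in its SOS valence.

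The only point needing care is the strictness of inclusion in \zcref{def:output-graph}. It is harmless here, because a vertex $o$ distinct from $\varnothing$ is nonempty, so $\varnothing \subset o$ is a proper inclusion.
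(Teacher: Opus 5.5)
Your proof is correct and takes the same approach as the paper's one-line justification. Every output set other than $\varnothing$ strictly contains $\varnothing$, so in $\OutputGraph(\valence(\sstate,\States,\Omega))$ the vertex $\varnothing$ is adjacent to every other vertex, and the graph is connected; you have just made explicit the strictness of the inclusion and the trivially connected one-vertex case $\{\varnothing\}$, both of which the paper leaves implicit.
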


\begin{lemma} \label{lem:connected-subset-component}
Let $O$ be an SOS whose SOS graph $\OutputGraph(O)$ is disconnected, and let $C_1, \ldots, C_k$ ($k \geq 2$) be its connected components (expressed as sets of vertices).
If $O' \subseteq O$ and $\OutputGraph(O')$ is connected, then $O' \subseteq C_i$ for some $i \in [1..k]$.
\end{lemma}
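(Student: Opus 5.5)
The plan is to use the basic fact that the induced subgraph of $\OutputGraph(O)$ on $O'$ is exactly $\OutputGraph(O')$. By \zcref{def:output-graph}, whether two vertices are joined by an edge depends only on the pair itself, through the relation $o \subset o'$ or $o' \subset o$. So for $o,o' \in O'$, the pair $\{o,o'\}$ is an edge of $\OutputGraph(O')$ if and only if it is an edge of $\OutputGraph(O)$. Consequently, every path in $\OutputGraph(O')$ is also a path in $\OutputGraph(O)$.

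The steps are then as follows.

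First, handle the degenerate case $O' = \varnothing$. It is included in any $C_i$, and $k \geq 2$ ensures some component exists. This matches the paper's convention that the empty graph is connected.

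Second, assume $O' \neq \varnothing$ and fix some $o_0 \in O'$. Since the $C_j$ partition $O$, there is a unique $i$ with $o_0 \in C_i$.

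Third, take any other $o \in O'$. Connectivity of $\OutputGraph(O')$ gives a path $o_0 = u_0, u_1, \ldots, u_\ell = o$ with every $u_j \in O'$ and every $\{u_j,u_{j+1}\}$ an edge of $\OutputGraph(O')$. By the observation above, each $\{u_j,u_{j+1}\}$ is also an edge of $\OutputGraph(O)$, so this is a path in $\OutputGraph(O)$. Hence $o$ lies in the same connected component as $o_0$, namely $C_i$. Because $o$ was arbitrary, $O' \subseteq C_i$.

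There is no real obstacle here. The only points needing care are stating explicitly that edges are defined pairwise, which makes the subgraph on $O'$ induced and lets paths lift, and recording the empty-set convention so that the statement also holds for $O' = \varnothing$.
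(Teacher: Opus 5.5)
Your proof is correct and is essentially the paper's argument: every edge of $\OutputGraph(O')$ is an edge of $\OutputGraph(O)$, so a path between two vertices of $O'$ cannot connect two distinct components $C_i, C_j$. The paper phrases this as a short contradiction, while you give the direct path-lifting version and also handle the case $O'=\varnothing$ explicitly, which the paper leaves implicit.
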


\begin{proof}
By definition of connected components, there is no edge in $\OutputGraph(O)$ between any vertex in $C_i$ and any vertex in $C_j$ for $i \neq j$.
Since edges in $\OutputGraph(O')$ are a subset of those in $\OutputGraph(O)$, if $O'$ contained vertices from two distinct components $C_i$ and $C_j$, then $\OutputGraph(O')$ would itself be disconnected.
Contradiction.
\end{proof}

\begin{lemma} \label{lem:nonempty-valence}
Given a set of states $\States$ and a set of output states $\Omega \subseteq \States$, all states $\sstate \in \States$ have a nonempty SOS valence: $\valence(\sstate,\States,\Omega) \neq \varnothing$.
\end{lemma}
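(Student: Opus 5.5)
We prove that every state has a nonempty SOS valence by following chains of strict extensions. By \zcref{def:termin}, these chains must stop, and \zcref{def:sos-valence} assigns a nonempty valence to every output state and passes it back up to all earlier states.

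\emph{Base case.} Take $\sstate \in \Omega$. By the first case of \zcref{def:sos-valence}, $\valence(\sstate,\States,\Omega) = \{\{v \mid \out_p^v \in \sstate\}\}$. This is a singleton, so it is nonempty even when the inner set is $\varnothing$.

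\emph{Inductive case.} Take $\sstate \in \States \setminus \Omega$. We first show that $\sstate$ has a strict extension in $\States$. Suppose it had none. Then $\sstate$ would be maximal in $\States$. By the convention stated with the definition of states, every maximal state is an output state, so $\sstate \in \Omega$, a contradiction.

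Alternatively, \zcref{def:resil} gives this directly: it provides $\sstate \uplus \{e_p\} \in \States$, which strictly contains $\sstate$. Either way, the second case of \zcref{def:sos-valence} applies. The valence of $\sstate$ is then a union over a nonempty family, so it is nonempty as soon as any single extension $\sstate'$ has nonempty valence.

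\emph{Well-foundedness.} To make the induction legitimate, we do not induct upward on $|\sstate|$, since extensions are larger. Instead we argue by contradiction. Suppose some $\sstate_0$ has empty valence. Then $\sstate_0 \notin \Omega$, so it has a strict extension $\sstate_1$, and $\sstate_1$ must also have empty valence, because the valence of $\sstate_0$ contains the valence of $\sstate_1$. Iterating gives an infinite strictly increasing chain $\sstate_0 \subsetneq \sstate_1 \subsetneq \cdots$ of states in $\States$, none of which is in $\Omega$.

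\emph{Main obstacle.} The key step is ruling out this infinite chain. \zcref{def:termin} only says each individual state is finite; it does not bound the lengths of chains. Our first approach is to note that the implicit model (states as prefixes of executions) is closed under unions of chains. Then $\bigcup_i \sstate_i$ would be a state with infinitely many events, contradicting \Termination.

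If that closure cannot be justified from the paper's definitions, the fallback is different. We would read the recursive \zcref{def:sos-valence} as well defined only when the strict-extension relation is well-founded, which is what \Termination is intended to guarantee. We then do well-founded induction on that relation, with the two cases above as base and step.
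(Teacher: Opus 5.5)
Your proof is correct and rests on the same facts as the paper's: an output state has a nonempty singleton valence, maximal states are output states, and a non-output state's valence contains that of each of its strict extensions, so it suffices that some output state lies above $\sigma$. The paper simply asserts that every state is an output state or is contained in one, whereas your chain argument makes explicit that this needs ascending chains of states to be finite---exactly the closure-under-unions reading of \Termination that the paper itself relies on when it constructs the critical state in the proof of \zcref{thm:disconnected}.
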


\begin{proof}
Let us consider some state $\sstate \in \States$.
By definition, all maximal states of $\States$ (\ie states that cannot be extended) are output states: $\{\sstate' \in \States \mid \Nexists \sstate'' \in \States, \sstate' \subsetneq \sstate''\} \subseteq \Omega$.
By \zcref{def:sos-valence}, the SOS valence of an output state $\sstate'$ (and in particular of a maximal state) is the singleton containing all output values of $\sstate'$, and it is therefore not empty.
State $\sstate$ is either an output state, or it is a subset of an output state: $\Exists \sstate' \in \Omega, \sstate \subseteq \sstate'$.
Again, by \zcref{def:sos-valence}, the SOS valence of $\sstate$ must include the SOS valence of $\sstate'$: $\valence(\sstate',\States,\Omega) \subseteq \valence(\sstate,\States,\Omega)$.
Therefore, the SOS valence of $\sstate$ is not empty.
\end{proof}

Note that the previous lemma forbids the existence of an SOS valence of $\varnothing$, but it does not forbid the existence of an SOS valence of $\{\varnothing\}$ (\ie the empty output set may be allowed).

\begin{lemma} \label{lem:discon-valence}
Given a set of states $\States$ and a set of output states $\Omega \subseteq \States$, all disconnected states $\sstate \in \States$ are not output states, \ie $\sstate \notin \Omega$.
\end{lemma}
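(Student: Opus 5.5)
The plan is a direct contrapositive: assume $\sstate \in \Omega$ and show that $\sstate$ must then be connected. The argument will rest only on \zcref{def:sos-valence} and on the convention that the single-vertex graph is connected.

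First, I unfold \zcref{def:sos-valence} in the case $\sstate \in \Omega$. The first branch of the definition applies, so
\[
\valence(\sstate,\States,\Omega) = \{\{v \mid \out_p^v \in \sstate\}\}.
\]
This is a singleton $\{o\}$, where $o$ is the set of values output in $\sstate$. It may happen that $o=\varnothing$, which gives valence $\{\varnothing\}$; this case must be handled as well.

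Second, I compute the SOS graph. By \zcref{def:output-graph}, $\OutputGraph(\{o\})$ has exactly one vertex and no edges, since there is no pair $o \subset o'$ inside a one-element set. A graph with a single vertex is connected, because every pair of its vertices is trivially joined by a path. So $\sstate$ is connected by the definition of connected and disconnected states given just before \zcref{obs:empty-notin-disconnected}. This contradicts the assumption that $\sstate$ is disconnected, and hence $\sstate \notin \Omega$.

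The only delicate point is the degenerate case $o=\varnothing$. Here I would either apply the same single-vertex argument directly, since $\{\varnothing\}$ is still a one-vertex graph, or invoke \zcref{obs:empty-notin-disconnected}, which says that a disconnected state cannot have $\varnothing$ in its valence. The empty graph $(\varnothing,\varnothing)$, which the paper declares connected by convention, never arises, because the valence is nonempty by \zcref{lem:nonempty-valence}. No other obstacles are expected.
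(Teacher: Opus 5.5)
Your proof is correct, and it is more direct than the paper's. You unfold the first branch of \zcref{def:sos-valence}: if $\sstate \in \Omega$, then $\valence(\sstate,\States,\Omega)$ is a singleton. Its SOS graph is therefore a single vertex, which is connected, so a disconnected state cannot lie in $\Omega$. This handles the case $o=\varnothing$ with no special treatment. The appeal to \zcref{lem:nonempty-valence} is not needed, since the singleton is nonempty by the definition itself.

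The paper argues along a different line:
\begin{enumerate}
\item A disconnected valence $O$ must contain two disjoint output sets, so $\bigcap O=\varnothing$.
\item Output events of $\sstate$ persist in every extension, so $\sstate$ contains no output event.
\item Hence, if $\sstate$ were an output state, its output set would be $\varnothing$, which \zcref{obs:empty-notin-disconnected} excludes from a disconnected valence.
\end{enumerate}
This route aims at a stronger structural fact: that a disconnected state has not yet produced any output. However, its first step does not hold in general. The SOS $\{\{1,2\},\{1,3\}\}$ has a disconnected SOS graph, yet its two sets intersect. A state in which one process has already output $1$, while the others must still choose between $2$ and $3$, has exactly this valence and does contain an output event.

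So the paper's stronger intermediate claim fails, and the lemma as stated is best proved exactly as you do, from the singleton form of the valence of output states.
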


\begin{proof}
Let us consider some disconnected state $\sstate \in \States$ with SOS valence $O=\valence(\sstate,\States,\Omega)$.
That is, $\OutputGraph(O)$ is disconnected.
By definition of $\OutputGraph()$ (\zcref{def:output-graph}), there must exist at least two output sets $o,o' \in O$ such that $o \cap o' = \varnothing$.
Therefore, $\bigcap O = \varnothing$.
As there is no common value across all the output sets $o \in O$, then there cannot be any output event $\out_p^v$ in $\sstate$, and the set of output values of $\sstate$ is the empty set $\varnothing$.
But by \zcref{obs:empty-notin-disconnected}, as $\sstate$ is disconnected, then $O$ cannot contain the empty output set $\varnothing$, and thus $\varnothing$ is not a valid output set where the algorithm can terminate.
Therefore, $\sstate \notin \Omega$.
\end{proof}

\begin{lemma} \label{lem:disco-state-exists}
\sloppy
An algorithm $A$ (with set of states $\States$ and set of output states $\Omega \subseteq \States$) that solves a disconnected SOS \ta{problem $\pb_O$} must have a disconnected state: $\Exists \sstate \in \States: \OutputGraph(\valence(\sstate,\States,\Omega))$ is a disconnected graph.
\end{lemma}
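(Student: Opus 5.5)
The plan is to look at the input states of $A$ and apply the valence recursion. If $A$ solves $\pb_O$, then some task $\task \in \pb_O$ is implemented by $A$ for some supported size $n$. By constraint \textit{(i)} of \zcref{def:sos-pb}, $\ST(\taskn) = O$, so for every $o \in O$ there is a pair $(\Vin,\Vout) \in \mapn$ with $\SV(\Vout) = o$. By constraint \textit{(ii)}, $\mapn = \Vvecinn \times \Vvecoutn$. Hence we may fix one input vector $\Vin \in \Vvecinn$ and pair it with an output vector realizing each $o \in O$.

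Next we use \zcref{def:in-out-impl}. For each $o$ it gives an input state $\sstate_\iin^o$ and an output state $\sstate_\out^o \supseteq \sstate_\iin^o$ whose output events realize $\Vout$. The input state is exactly the set of input events of $\Vin$, so all the $\sstate_\iin^o$ coincide in a single state $\sstate_0$. For the output states we need $\{v \mid \out_p^v \in \sstate_\out^o\} = o$. The definition only gives inclusion of the output events of $\Vout$, but since a process outputs at most one value and each $p_j$ is fixed by $\Vout$, any extra output event would change the output vector. So the output set of $\sstate_\out^o$ is exactly $o$. Unfolding \zcref{def:sos-valence} along the chain $\sstate_0 \subseteq \sstate_\out^o$ shows $\valence(\sstate_\out^o,\States,\Omega)=\{o\} \subseteq \valence(\sstate_0,\States,\Omega)$. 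Therefore $O \subseteq \valence(\sstate_0,\States,\Omega)$.

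We then show the valence is contained in $O$. Every output state reachable from $\sstate_0$ corresponds to an execution of $A$ on $\Vin$. By the safety property, its output set lies in $O$, so $\valence(\sstate_0,\States,\Omega) = O$. Since $\OutputGraph(O)$ is disconnected by hypothesis, $\sstate_0$ is a disconnected state, which proves the lemma.

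The main obstacle is this last step, namely turning "$A$ implements $\task$" into "every output state extending $\sstate_0$ has output set in $O$". \zcref{def:in-out-impl} only states the forward direction (every task pair is realized by some state), not the converse. I would handle this by invoking the informal definition of implementation, $\task = (\Vvecin_A,\Vvecout_A,\map_A)$: every execution of $A$ yields a pair in $\map$, so its output set is in $\ST(\taskn)=O$. I would state this converse correspondence explicitly as the modelling assumption linking states to executions. A minor point is $O=\{\varnothing\}$ or singleton $O$, but these are connected and so excluded by hypothesis.
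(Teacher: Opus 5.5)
Your proof is correct and takes the same route as the paper. Validity-freeness ($\mapn = \Vvecinn \times \Vvecoutn$) together with \zcref{def:in-out-impl} gives an input state whose SOS valence contains every $o \in O$; it equals $O$, so it is disconnected. You are more careful than the paper on two points: you fix a single system size $n$, and you argue the inclusion $\valence(\sstate_\iin,\States,\Omega) \subseteq O$ explicitly via safety, which is needed because a superset of a disconnected SOS can be connected, whereas the paper simply asserts $\valence(\sstate_\iin,\States,\Omega)=O$ and relies implicitly on the same correspondence between output states and executions.
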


\begin{proof}
\ta{As algorithm $A$ solves problem $\pb_O$, then it implements some task $\task = (\Vvecin,\Vvecout,\map) \in \pb_O$ (see \zcref{sec:pb-formal}).}
By \zcref{def:in-out-impl}, $\States$ must contain input/output states corresponding to every pair of input/output vectors in \ta{$\task$}.
By the fact that SOS \ta{problems} are validity-free (\zcref{def:sos-pb}), every input vector \ta{$\Vin \in \Vvecin$} can yield all output vectors \ta{$\Vout \in \Vvecout$ (\ie $\map = \Vvecin \times \Vvecout$)}.
This implies that every input state can reach every output set $o \in O$, or more formally: $\Forall \sstate_\iin \in \instates(\States), \Forall o \in O, \Exists \sstate_\out \in \Omega, \sstate_\iin \subseteq \sstate_\out: \{v \mid \iin_p^v \in \sstate_\out\}=o$.
By the definition of SOS valence (\zcref{def:sos-valence}), all input states therefore have the entire SOS $O$ as their SOS valence, \ie $\Forall \sstate_\iin \in \instates: \valence(\sstate_\iin,\States,\Omega)=O$.
By the fact that \ta{$\pb_O$ is a disconnected SOS problem}, $\OutputGraph(O)$ is a disconnected graph, and therefore any input state $\sstate_\iin \in \instates$ is disconnected.
\end{proof}

\noindent
\textbf{Theorem~\ref{thm:disconnected}} (Non-resilience of disconnected SOS \ta{problems})\textbf{.} 
\emph{A disconnected SOS \ta{problem $\pb_O$} can be \ta{solved} asynchronously only if \ta{$f=0$}.}

\begin{proof}
By way of contradiction, let us assume that there is an algorithm $A$ that implements any \ta{task $\task \in \pb_O$} under asynchrony and \jw{$f>0$}, and where \ta{$\pb_O$ is a disconnected SOS problem} with SOS $O$.
Let us consider the set of states $\States$ and the set of output states $\Omega \subseteq \States$ of $A$.
As communication is asynchronous, $\Asynchrony(\States)$ is verified.
Moreover, as \jw{$f>0$}, there can be at least one crash in the system, and $\Resilience(\States,\Omega)$ is verified.
Finally, a task execution must terminate by definition, so $\Termination(\States)$ is verified.

By \zcref{lem:disco-state-exists}, disconnected states must exist.
Let us consider any disconnected state $\sstate_m \in \States$.
We now inductively show that there exists what we call a \emph{critical} state $\sstate_c$, \ie a disconnected state that only has connected extensions: 
\begin{align*}
    \Exists \sstate_c &\in \States: \\ (&\OutputGraph(\valence(\sstate_c,\States,\Omega)) \text{ is a disconnected graph}) \;\land \\
    (&\Forall \sstate' \in \States, \sstate_c \subsetneq \sstate': \\
    &\hspace{2em} \OutputGraph(\valence(\sstate',\States,\Omega)) \text{ is a connected graph}).
\end{align*}
As $\sstate_m$ has a disconnected SOS valence, \zcref{lem:discon-valence} implies that is not an output state, \ie $\sstate_m \notin \Omega$.
By $\Resilience(\States,\Omega)$, $\sstate_m$ must have some extensions by one event.
If all extensions have an connected SOS valence, then $\sstate_m$ satisfies the property of a critical state and we set $\sstate_c=\sstate_m$.
Otherwise, $\sstate_m$ has some extension by one event $\sstate_m \uplus \{e\} \in \States$ that has a disconnected SOS valence.
Then, we make $\sstate_m$ this new extension $\sstate_m \uplus \{e\}$ and repeat this procedure.
Observe that this process must eventually end by finding a critical state, since otherwise, it means that an infinite state exists, which contradicts $\Termination(\States)$.

\sloppy
Let $O_c = \valence(\sstate_c,\States,\Omega)$.
Since state $\sstate_c$ is disconnected, $\OutputGraph(O_c)$ has at least two connected components (expressed as sets of vertices); fix two of them, $C_1$ and $C_2$.
We will now show that there exist two immediate $1$-event extensions $\sstate_1=\sstate_c \uplus \{e_p\}$ and $\sstate_2=\sstate_c \uplus \{e_{p'}\}$ in $\States$ such that $\valence(\sstate_1,\States,\Omega) \subseteq C_1$ and $\valence(\sstate_2,\States,\Omega) \subseteq C_2$ (and in particular $\valence(\sstate_1,\States,\Omega) \cap \valence(\sstate_2,\States,\Omega) = \varnothing$).

By \zcref{def:sos-valence}, $O_c = \bigcup_{\sstate' \in \States,\, \sstate_c \subsetneq \sstate'} \valence(\sstate',\States,\Omega)$.
In particular, $O_c = \bigcup_{\sstate_c \uplus \{e\} \in \States} \valence(\sstate_c \uplus \{e\},\States,\Omega)$, since every extension of $\sstate_c$ passes through some immediate $1$-event extension.
As $O_c$ intersects both $C_1$ and $C_2$, there exist immediate extensions $\sstate_1, \sstate_2$ with $\valence(\sstate_1,\States,\Omega) \cap C_1 \neq \varnothing$ and $\valence(\sstate_2,\States,\Omega) \cap C_2 \neq \varnothing$.
By criticality of $\sstate_c$, both $\valence(\sstate_1,\States,\Omega)$ and $\valence(\sstate_2,\States,\Omega)$ have connected SOS graphs.
Since $\valence(\sstate_1,\States,\Omega), \valence(\sstate_2,\States,\Omega) \subseteq O_c$, \zcref{lem:connected-subset-component} implies that each is included in a single connected component of $\OutputGraph(O_c)$.
Hence $\valence(\sstate_1,\States,\Omega) \subseteq C_1$ and $\valence(\sstate_2,\States,\Omega) \subseteq C_2$, so $\valence(\sstate_1,\States,\Omega) \cap \valence(\sstate_2,\States,\Omega) = \varnothing$.
We now derive a contradiction by considering two cases.
\begin{itemize}[leftmargin=*]
    \item Case 1: $p \neq p'$.
    Given that the processes of the two events are distinct, from $\Asynchrony(\States)$, we have $\sstate''=\sstate_c \uplus \{e_p,e_{p'}\} \in \States$.
    Since $\sstate''$ extends both $\sstate_1$ and $\sstate_2$, \zcref{def:sos-valence} gives $\valence(\sstate'',\States,\Omega) \subseteq \valence(\sstate_1,\States,\Omega) \cap \valence(\sstate_2,\States,\Omega)= \varnothing$.
    Thus $\valence(\sstate'',\States,\Omega)=\varnothing$, which contradicts \zcref{lem:nonempty-valence}.
    
    \item Case 2: $p = p'$.
    By $\Resilience(\States,\Omega)$ (since $\sstate_c \notin \Omega$ by \zcref{lem:discon-valence}), there exists an immediate extension $\sstate_3=\sstate_c \uplus \{e_{p''}\} \in \States$ with $p'' \neq p$.
    By criticality of $\sstate_c$, $\valence(\sstate_3,\States,\Omega)$ has a connected SOS graph.
    Since $\valence(\sstate_3,\States,\Omega) \subseteq O_c$, \zcref{lem:connected-subset-component} implies $\valence(\sstate_3,\States,\Omega) \subseteq C_j$ for some $j$.
    Without loss of generality, suppose $\valence(\sstate_3,\States,\Omega) \cap C_1 = \varnothing$ (if $\valence(\sstate_3,\States,\Omega) \subseteq C_1$, use $C_2$ and $\sstate_2$ instead in the following).
    Then $\valence(\sstate_3,\States,\Omega) \cap \valence(\sstate_1,\States,\Omega) = \varnothing$, since $\valence(\sstate_1,\States,\Omega) \subseteq C_1$.
    As $p'' \neq p$, by $\Asynchrony(\States)$, $\sstate_c \uplus \{e_p, e_{p''}\} \in \States$.
    Since this state extends both $\sstate_1$ and $\sstate_3$, \zcref{def:sos-valence} gives $\valence(\sstate_c \uplus \{e_p,e_{p''}\},\States,\Omega) \subseteq \valence(\sstate_1,\States,\Omega) \cap \valence(\sstate_3,\States,\Omega)= \varnothing$, contradicting \zcref{lem:nonempty-valence}.
    \qedhere
\end{itemize}
\end{proof}

\end{document}